\xpatchcmd{\maketitle}{\thispagestyle{empty}}{}{}{}
\newtheorem{theorem}{Theorem }
\newcounter{lemmaproof}
\newenvironment{lemmaproof}[1][]{\refstepcounter{lemmaproof} \noindent{} }{\hfill $\blacksquare$}
\newtheorem{lemma}{Lemma}
\newtheorem{claim}{Claim}
\newcommand{\icol}[1]{
  \begin{bmatrix}#1\end{bmatrix}%
}
\newcommand{\irow}[1]{
  \begin{bmatrix}#1\end{bmatrix}%
}
\newcommand{\In}{{\rm In}}
\newcommand{\Out}{{\rm Out}}
\newcommand{\rk}{{\rm rk}}
\newcommand{\mincut}{{\rm mincut}}
\newcommand{\I}{{\rm I}}
\newcommand{\Hi}{{\rm H}}
\begin{document}

\setstcolor{red}
\title{Secure Network Coding in the Setting in Which a Non-Source Node May Generate Random Keys} 


\author{%
  \IEEEauthorblockN{Debaditya Chaudhuri}
  \IEEEauthorblockA{University at Buffalo\\
                    Email: debadity@buffalo.edu}
  \and
  \IEEEauthorblockN{Michael Langberg}
  \IEEEauthorblockA{University at Buffalo\\
                    Email: mikel@buffalo.edu}
  \and
  \IEEEauthorblockN{Michelle Effros}
  \IEEEauthorblockA{California Institute of Technology\\
                    Email: effros@caltech.edu}
}
\maketitle

\begin{abstract}
It is common in the study of secure multicast network coding in the presence of an eavesdropper that has access to $z$ network links, to assume that the source node is the only node that generates random keys. 
In this setting, the secure multicast rate is well understood.
Computing the secure multicast rate, or even the secure unicast rate, in the more general setting in which all network nodes may generate (independent) random keys is known to be as difficult as computing the (non-secure) capacity of multiple-unicast network coding instances --- a well known open problem.
This work treats an intermediate model of secure unicast in which only one node can generate random keys, however that node need not be the source node. 
The secure communication rate for this setting is characterized again with an eavesdropper that has access to $z$ network links. 
\end{abstract}

\section{Introduction}

In this work, we study secure network communication over a directed acyclic network $\mathcal{G} = (\mathcal{V},\mathcal{E})$ having a single source node $S$, a single terminal node $T$, and a single node $K$,  which is capable of generating random ``keys'' independent of the messages generated by $S$. We employ a notion of secure ``wiretap'' communication networks introduced by Cai and Yeung in \cite{cai2002secure} and studied further in, for example \cite{feldman2004capacity,cai2007security,yeung2008optimality,el2012secure,silva2011universal}.
Under this notion of security, given a communication scheme over $\mathcal{G}$, we consider an edge $e \in \mathcal{E}$ of the network to be secure in the presence of a wiretap adversary if and only if $I(M;X_e) = 0$, where $M$ denotes the source message and $X_e$ denotes the information communicated on edge $e$.\footnote{Detailed definitions of all concepts discussed here and below appear in Section~\ref{MOD}.}  
To be secure in the presence of an adversary that wiretaps any size-$z$ subset $\mathcal{W} = \{e_1,\cdots,e_z\} \subset \mathcal{E}$ of edges, we require that $I(M;X_\mathcal{W}) = 0$, where $X_\mathcal{W} = (X_{e_1},\cdots,X_{e_z})$. 

Given integers $R$ and $z$, we define a secure network code over the network $\mathcal{G}$ to be $(R,z)$-feasible if it allows information to be communicated from the source $S$ to the terminal $T$ at rate $R$ and, in addition, it secures the network against a wiretap adversary that eavesdrops on up to $z$ edges of the network. Our work entails determining, for each $z$, the closure of the set of rates that are $(R,z)$-feasible, thereby deriving the capacity-security region. 

When $K=S$, the capacity-security region for secure multicast network codes is well understood \cite{cai2002secure, feldman2004capacity} with several follow up works \cite{cai2007security,yeung2008optimality,el2012secure,silva2011universal} that address various methods to alter any given non-secure linear network code into a new code that is secure. 
In contrast, determining the capacity-security region for secure network codes over a single-source single-terminal network, where every node can generate random keys, is as hard as the problem of characterizing the (non-secure) capacity region of the $k$-unicast problem as shown by \cite{huang2018}. Results of a similar nature are also presented in \cite{chan2014network}. The $k$-unicast problem is a well known open problem in the study of network codes \cite{chan2014network,6293890,langberg2009multiple,jalali2012capacity,4460828}.   

In this work, we seek to make progress in the apparently difficult generalization from the scenario where only the source can generate random keys to the scenario where all nodes can generate keys by studying the case where only a single node can generate keys but allowing that single node to be arbitrary. 
Our central result is a characterization of the capacity-security region in the unicast (single-source single-terminal) setting when only a single network node $K \ne S \in \mathcal{V}$ can generate random keys.



The remainder of the paper is organized as follows. In Section \ref{MOD}, we present our model and preliminary notation. Our main result, the capacity-security characterization of the networks at hand, appears in Section~\ref{RES}. 
The characterization is combinatorial in nature and  involves different cut-set bounds between the source node, the key generating node, and the terminal node. 
Achievability is proven in Section \ref{TH1_ACH} via a reduction from secure communication over $\mathcal{G}$ to (non-secure) multi-source multi-cast network coding over a modified network $\mathcal{G}^*$ as shown in Figure \ref{SDAGR}. 
The converse proof, which is based on cutset bounds, appears in Section \ref{TH1_CONVa_body}. An additional converse proof, in the more general context of cyclic networks, is presented in Appendix \ref{TH1_CONV}. The proofs of some one of our lemmas and claims are presented in Appendix \ref{CS_LEM} and Appendix \ref{CLAIMPROOFS}, respectively.

\section{Network Model}
\label{MOD}

Our system model consists of the following components:
\begin{itemize}
\item [(a)] A finite directed acyclic graph $\mathcal{G} = \{\mathcal{V},\mathcal{E}\}$. We assume that each edge $e \in \mathcal{E}$ noiselessly transmits one unit of information (i.e., one field element in a given field $\mathbb{F}_q$) per unit time. We use multiple edges to model an edge with the ability to communicate more than one information symbol per unit time.

\item [(b)] A source node $S$, which generates a source message vector of length $R$, $M = \irow{M_1 & M_2 & \cdots & M_{R}}^T$, with $M_1, M_2, \cdots, M_{R}$ independently and uniformly distributed over the field $\mathbb{F}_q$ of size $q$.  

\item [(c)] A terminal node $T \in \mathcal{V}$, which is required to decode all the messages generated by the source $S$ with zero error.

\item [(d)] A node $K \in \mathcal{V}$, which generates a random ``key'' vector, $N = \irow{N_1, \cdots, N_{|N|}}^T$ with $N_1, \cdots, N_{|N|}$ independently and uniformly distributed over the field $\mathbb{F}_q$ with $N$ independent of $M$.  

\item [(e)] An eavesdropper that can access any subset $\mathcal{W} \subset \mathcal{E}$ of edges for which $|\mathcal{W}| \leq z$.
\end{itemize}

In the following subsections, we introduce our definition of a network code and discuss the notions of topological order and cut sets.

\begin{figure}[!t]
\centering
\subfloat[]{\includegraphics[width=1.55in, height=1.1in]{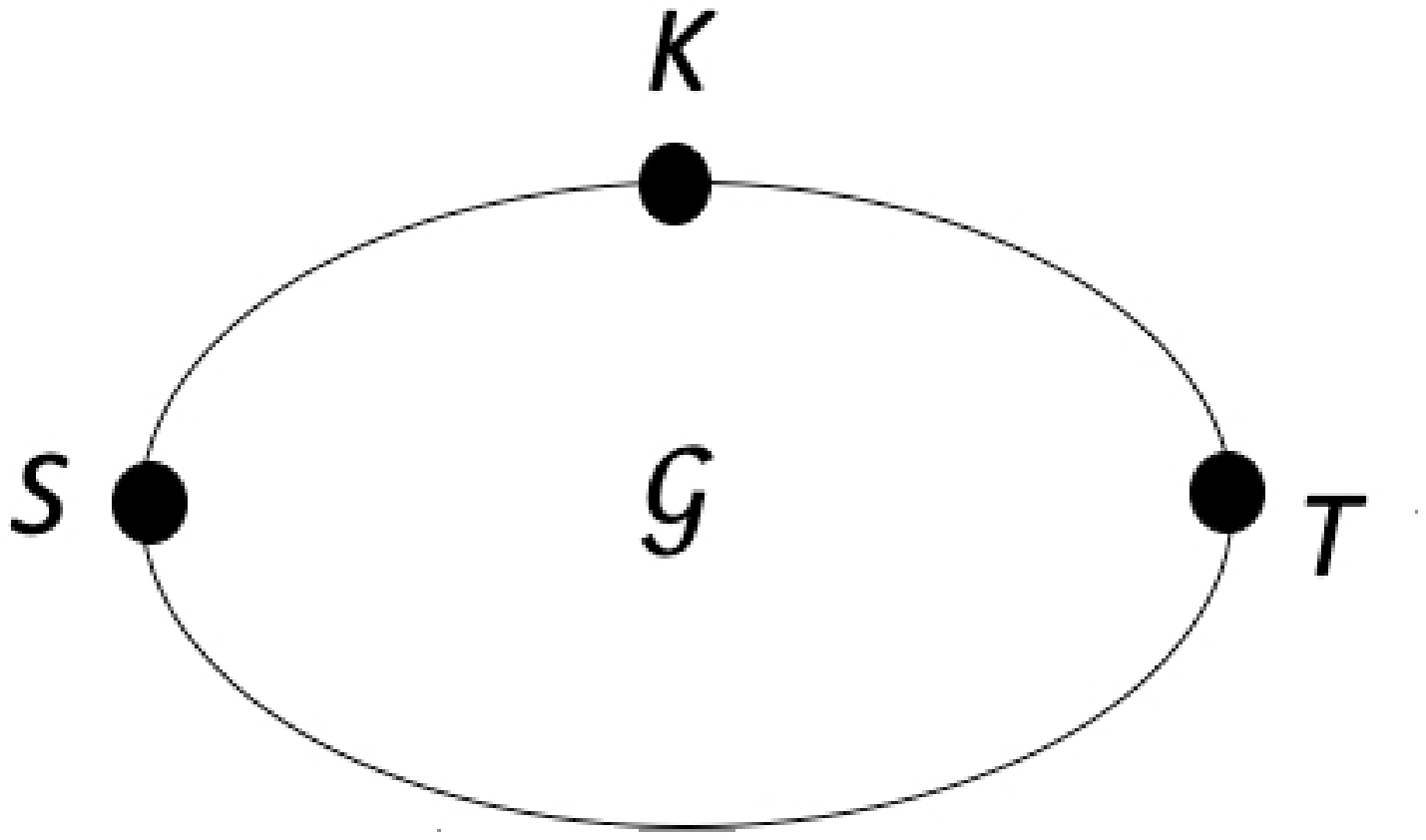}
\label{SDAG}}
\hspace{2mm}
\subfloat[]{\includegraphics[width=1.56in, height=1.1in]{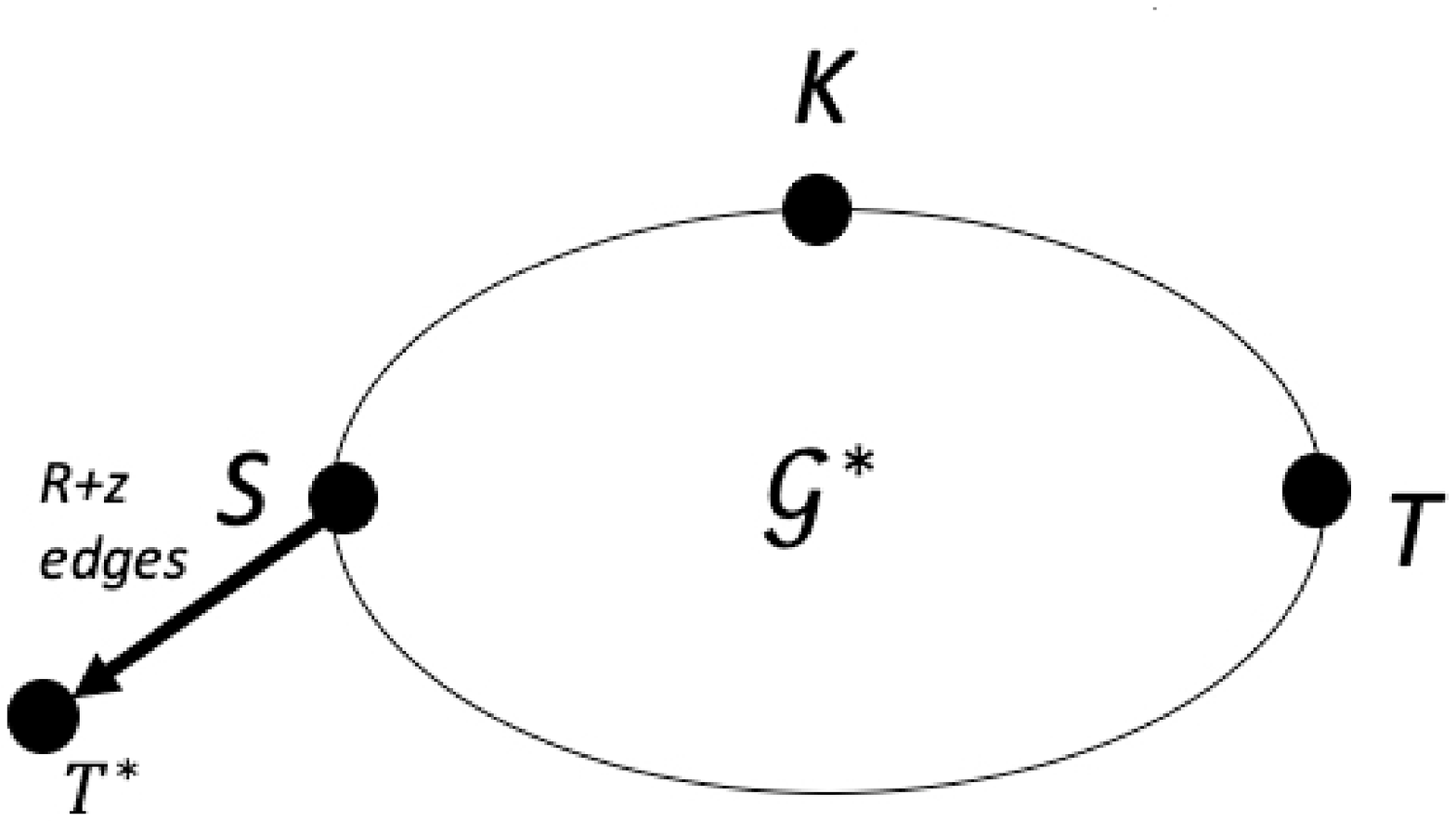}
\label{SDAGR}}
\vspace{0.0cm}
\caption{(a) Network model $\mathcal{G}$, and (b) the modified network $\mathcal{G}^*$ obtained from $\mathcal{G}$ by adding $T^*$ and setting the demands at $T$ and $T^*$ to $(M,N)$.}
\end{figure}

\subsection{Network Code}
\label{NC}
We define a scalar linear network code $\mathcal{N}$ for the network $\mathcal{G}$ to be an assignment of a linear encoding function $f_e$ to each edge $e \in \mathcal{E}$ and a linear decoding function $g_T$ to terminal $T$. For $e \in \mathcal{E}$, we denote the edge message on $e$ by $X_e$, and for any set $\mathcal{A} \subseteq \mathcal{E}$, we define $X_\mathcal{A} = \{X_e: e \in \mathcal{A}\}$. If $e \in \mathcal{E}$ and $e = (u,v)$ then the edge message $X_e$ is a linear combination of all the messages carried by the edges in $\In(u) = \{(w,u): (w,u) \in \mathcal{E}\}$, the incoming edges of $u$. The edge message at $e$ is obtained using local encoding at $u$. We define $X_e$ using the local encoding function $\bar{f}_e$ on $e = (u,v)$ as
\begin{align}
    \label{NC_EQ1a}
    X_e = \bar{f}_e(X_{\In(u)}) = \sum_{e' \in \In(u)}\bar{c}_{e',e}X_{e'}.
\end{align}
Here, $X_e$ denotes the message on edge $e$, for each edge $e' \in \In(u)$, $X_{e'}$ denotes the messages on edges $e'$ and $\bar{c}_{e',e}$ is the coefficient acting on each message $X_{e'}$. If edge $e$ is an outgoing edge of $S$ (or $K$), then $X_e$ is a function of the source messages (or keys) as well. Given, such a network code, an adversary that wiretaps any size-$z$ subset of edges $\mathcal{W} \subset \mathcal{E}$ would obtain the information $X_\mathcal{W}$ on the wiretapped edges. A network code is said to be $(R,z)$-feasible if
\begin{align}
    \label{NC_EQ1}
    g_T(X_{\In(T)}) &= M\\
    \label{NC_EQ2}
    \I(M;X_\mathcal{W}) &= 0,
\end{align}
where $T$ is the terminal node and $M$ is the $R$-dimensional message vector generated by the source $S$.

\subsection{Topological Order}
\label{NDCOMP}
To achieve secure communication over the network $\mathcal{G}$, the source $S$ must ``mix'' the message symbols in $M$ with the (received) random key symbols in $N$. This mixture of messages and keys is communicated to the terminal $T$, which must decode correctly to reconstruct message $M$. Let $\mathcal{V} = \{v_0,...,v_{n-1}\}$. Since $\mathcal{G}$ is directed and acylic, we assume, without loss of generality, that the nodes $v_i \in \mathcal{V}$ are indexed according to their topological order in $\mathcal{G}$. This implies that the node $v_i$ receives its incoming information only from nodes $v_0, \cdots, v_{i-1}$.
We also assume that the index of $K$ in this topological order is less than that of $S$ which in turn is less than that of the terminal $T$. More specifically, we assume $K=v_0$, $S=v_m$, and $T=v_{n-1}$ for $v_0, v_m, v_{n-1} \in \mathcal{V}$ and $0 < m < n-1$. There is no loss of generality in these assumptions as otherwise, either transmissions on outgoing edges of $S$ cannot be secure or the communication rate $R$ between $S$ and $T$ is zero. This implies that nodes $\{v_0,\dots,v_{m-1}\}$ only transmit, on their outgoing edges, functions of the information generated by $K$ while nodes $\{v_m,\dots,v_{n-1}\}$ may potentially transmit functions of the information generated at both $S$ and $K$.

\subsection{The Cut Sets}
\label{NPROP}


 
For any pair of nodes $u,v \in \mathcal{V}$, a cut is a set of edges in $\mathcal{E}$ which, when removed, disconnects all paths from $u$ to $v$. The cut with the minimum capacity that separates $u$ and $v$ is denoted as $\mincut_\mathcal{G}(u,v)$. Since each edge in $\mathcal{E}$ is assumed to be of unit capacity, $|\mincut_\mathcal{G}(u,v)|$ represents the total capacity of all the edges in $\mincut_\mathcal{G}(u,v)$. 
    The cuts as defined above may also separate sets of nodes in the network $\mathcal{G}$. For a subset of nodes $\mathcal{A}$, the set $\mincut_{\mathcal{G}}(\mathcal{A},v)$ is the minimum capacity cut that separates the set of nodes in $\mathcal{A} \subset \mathcal{V}$ from the node $v \in \mathcal{V}$.
    For the network $\mathcal{G}$, we use the following notation
    \begin{align}
        C_{K-S} &= |\mincut_\mathcal{G}(K,S)|\nonumber\\
        C_{K-T} &= |\mincut_\mathcal{G}(K,T)|\nonumber\\
        C_{S-T} &= |\mincut_\mathcal{G}(S,T)|\nonumber\\
        C_{KS-T} &= |\mincut_\mathcal{G}(\{K,S\},T)|\nonumber
    \end{align}


\section{Results}
\label{RES}

In this work we prove the following theorem.

\begin{theorem}
\label{TH1}
Given the directed acyclic network $\mathcal{G}$ and integers $R$ and $z$ such that $R > 0$, there exists an $(R,z)$-feasible network code $\mathcal{N}$ over $\mathcal{G}$ if and only if, 
\begin{align}
    \label{BND1}
    &z \leq \min(C_{K-S},C_{K-T})\\
    \label{BND2}
    &R \leq C_{S-T}\\
    \label{BND3}
    &R+z \leq C_{KS-T}
\end{align}
\end{theorem}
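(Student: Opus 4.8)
\emph{Overall strategy and achievability.} The plan is to prove the two directions separately: sufficiency by reducing to a (non-secure) multi-source multicast problem on the modified network $\mathcal{G}^*$ of Figure~\ref{SDAGR}, and necessity by applying a chain of cut-set and wiretap arguments to an arbitrary feasible linear code. For achievability, set $|N|=z$ and let $\mathcal{G}^*$ be $\mathcal{G}$ with a new terminal $T^*$ and $R+z$ parallel edges from $S$ to $T^*$ adjoined, placing the demand $(M,N)$ at both $T$ and $T^*$. Viewing $S$ (rate $R$) and $K$ (rate $z$) as two independent sources and using the standard super-source reduction, the multicast problem on $\mathcal{G}^*$ admits a linear solution over a sufficiently large field iff every min-cut from the (super-)source dominates the corresponding sum of source rates; a short computation identifies these conditions as exactly $C_{S-T}\ge R$, $C_{K-T}\ge z$, and $C_{KS-T}\ge R+z$ (the cuts toward $T$) together with $C_{K-S}\ge z$ (the binding cut toward $T^*$, every $K$--$T^*$ path passing through $S$), i.e. precisely \eqref{BND1}--\eqref{BND3}. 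Given such a code, I would restrict it to $\mathcal{G}$ (discarding $T^*$ and its edges) and verify $(R,z)$-feasibility: $T$ still recovers $(M,N)$, hence $M$; and writing $X_e=\langle g_e,(M,N)\rangle$ with $g_e\in\mathbb{F}_q^{R+z}$, the condition $I(M;X_\mathcal{W})=0$ for $|\mathcal{W}|\le z$ is equivalent to $\mathrm{span}\{g_e:e\in\mathcal{W}\}$ meeting $\mathbb{F}_q^{R}\times\{0\}$ only in the origin, a transversality condition (dimensions $\le z$ and $R$ inside $R+z$) that a generic choice of local coefficients over a large field satisfies for all $\mathcal{W}$ simultaneously, via the usual Schwartz--Zippel argument. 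The hypotheses enter here because $C_{K-S}\ge z$ forces every edge reachable from $S$ to be reachable from $K$, so no $M$-carrying edge is forced to have a trivial $N$-component, which is what makes each transversality condition non-vacuously achievable.

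\emph{Converse, \eqref{BND2} and \eqref{BND3}.} Assume a feasible linear code. Bound \eqref{BND2} is the ordinary max-flow bound: on a minimum $S$--$T$ cut $\mathcal{C}$, $X_{\In(T)}$ is a function of $(X_\mathcal{C},N)$ (the only fresh randomness on the sink side being $N$, since $K$ precedes $S$), so $R=H(M\mid N)=I(M;X_{\In(T)}\mid N)\le H(X_\mathcal{C})\le|\mathcal{C}|=C_{S-T}$. Bound \eqref{BND3} is the coset-coding bound: on a minimum $\{K,S\}$--$T$ cut $\mathcal{C}'$ the message $M$ is a function of $X_{\mathcal{C}'}$; then $|\mathcal{C}'|\ge z$ (otherwise wiretapping $\mathcal{C}'$ would leak $M$), and choosing $\mathcal{W}\subseteq\mathcal{C}'$ with $|\mathcal{W}|=z$ gives $R=I(M;X_{\mathcal{C}'})=I(M;X_{\mathcal{C}'\setminus\mathcal{W}}\mid X_\mathcal{W})\le|\mathcal{C}'|-z$.

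\emph{Converse, \eqref{BND1}.} For $z\le C_{K-S}$: choose the minimum $K$--$S$ cut $\mathcal{C}_{KS}$ among the edges both of whose endpoints have topological index $\le m$ --- legitimate, since every $K$--$S$ path stays in that region --- so that $X_{\mathcal{C}_{KS}}$ is a function of $N$ alone, hence independent of $M$. If $z>C_{K-S}$, then wiretapping $\mathcal{C}_{KS}$ together with any $z-C_{K-S}$ edges of $\Out(S)$ forces, by security, those edges to be functions of $X_{\mathcal{C}_{KS}}$; ranging over the subsets, all of $\Out(S)$ becomes a function of $X_{\mathcal{C}_{KS}}$, so $X_{\In(T)}$ is a function of $N$ and $T$ cannot recover $M$ --- a contradiction. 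For $z\le C_{K-T}$: on a minimum $K$--$T$ cut $\mathcal{C}_{KT}$, every edge on the sink side has its $N$-encoding vector inside $B^{*}:=\mathrm{span}\{(g_e)_N:e\in\mathcal{C}_{KT}\}$, and $\dim B^{*}\le C_{K-T}$; since $\In(T)$ lies among these edges and $|\In(T)|\ge z$ (again, else wiretapping $\In(T)$ leaks $M$), if $z>C_{K-T}$ then, using that the space of linear dependences among the $N$-vectors is spanned by its minimal (circuit) relations, either some minimal relation lifts to a nonzero relation among the corresponding $M$-vectors --- yielding a wiretap set of at most $z$ edges that violates security --- or else every $M$-vector is a fixed linear image of its $N$-vector, which prevents $\mathbb{F}_q^{R}\times\{0\}$ from lying in the span of $T$'s incoming vectors and breaks decoding.

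\emph{Main obstacle.} I expect the hard part to be the two security-based converse bounds in \eqref{BND1}: in a general DAG a minimum cut may have edges running ``backwards'' across it, so one cannot directly assert that a $K$-side cut carries only key symbols without either selecting the cut carefully (as for $C_{K-S}$, using the topological order) or descending to the linear algebra of the global encoding vectors (as for $C_{K-T}$, where the circuit-spanning property of the dependency space is the key lemma); keeping the wiretap-set bookkeeping honest is where slips are most likely. On the achievability side the analogous delicate point is checking that a single generic multicast solution on $\mathcal{G}^*$ meets the wiretap condition for all $\binom{|\mathcal{E}|}{z}$ edge subsets at once --- i.e. that none of those conditions is identically violated --- which is again precisely where $C_{K-S}\ge z$ is used.
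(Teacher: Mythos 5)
Your overall architecture matches the paper's: achievability by adjoining $T^*$ with $R+z$ parallel edges, solving the non-secure multi-source multicast on $\mathcal{G}^*$ generically, restricting to $\mathcal{G}$, and checking security through the rank/transversality condition (your ``span meets $\mathbb{F}_q^R\times\{0\}$ only at the origin'' is exactly the Cai--Yeung condition $\rk(\irow{\mathbf{A}_\mathcal{W} & \mathbf{B}_\mathcal{W}})=\rk(\mathbf{B}_\mathcal{W})$); and your converse arguments for \eqref{BND2}, \eqref{BND3} and for $z\le C_{K-S}$ are the paper's cut-set/coset arguments in only lightly different clothing. Where you genuinely diverge is the bound $z\le C_{K-T}$: you replace the paper's symmetric trick --- wiretap $\mathbb{C}_{K-T}\cup\{e\}$ for $e\in\In(T)$, use security to force $X_{\In(T)}$ to be a function of $X_{K-T}$, which is itself independent of $M$ --- by a dependence-space argument on the global encoding vectors at $\In(T)$ (key-components have rank at most $C_{K-T}$, circuits have support at most $C_{K-T}+1\le z$, security kills every circuit relation's $M$-part, so all relations lift and the row span becomes the graph of a linear map, defeating decoding). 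That argument is sound, but note what each route buys: yours is purely linear-algebraic and sidesteps the ``functions across the cut'' bookkeeping, but it only applies to linear codes, whereas the paper's converse is explicitly for arbitrary (non-linear) codes and, via Lemma \ref{LEM_CONV}, extends to cyclic networks with time-varying encoders (Appendix \ref{TH1_CONV}); moreover your stated reason for avoiding the direct argument (backward edges across the cut) is a non-issue, since for the canonical min-cut partition all information entering the $T$-side crosses the cut, so the same template you used for $C_{K-S}$ works for $C_{K-T}$ without linearity.

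On achievability, the one place your proposal leans on an assertion rather than a proof is exactly the paper's technical core: showing that for \emph{every} wiretap set $\mathcal{W}$ of size $z$ the security condition is non-vacuously satisfiable, so that Schwartz--Zippel (or the random-coding union bound) applies. Your justification --- $C_{K-S}\ge z$ makes every $M$-carrying edge reachable from $K$, so no edge has a forced-trivial key component --- is a per-edge statement, while what is needed is a per-set full-rank statement: e.g.\ for $\mathcal{W}\subset\Out(S)$ one needs the source to have received $z$ linearly independent key combinations, i.e.\ $\rk(\mathbf{V}_{\In(S)})=z$ (which the paper extracts from decodability of $N$ at $T^*$), and then that $\mathbf{B}_\mathcal{W}=\mathbf{\Phi}\mathbf{B}_K$ has rank $z$ with high probability; this is the content of Lemma \ref{LEM_SEC} together with Claims \ref{SEC_CL_1} and \ref{SEC_CL_2}. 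You correctly flag this as the delicate step, but filling it requires essentially that matrix analysis, not just reachability.
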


The proof of Theorem \ref{TH1} is divided into two parts, the achievability proof, shown in Section \ref{TH1_ACH}, and the converse proof shown in Section \ref{TH1_CONVa_body}.

\section{Proof of Theorem \ref{TH1}: Achievability}
\label{TH1_ACH}
\begin{proof}
For the network $\mathcal{G} = (\mathcal{E},\mathcal{V})$ with source node $S$ and key generating node $K$ holding $R$ message symbols $M$ and $z$ key symbols $N$ respectively, we set the values of integers $R$ and $z$ such that they satisfy the bounds (\ref{BND1}), (\ref{BND2}), and (\ref{BND3}). We implement a random linear network code $\mathcal{N}$ over $\mathcal{G}$ and over a sufficiently large field $\mathbb{F}_q$ such that, for any edge $e = (u,v) \in \mathcal{E}$, the local encoding coefficients $\{\bar{c}_{e',e}\}_{e' \in \In(u)}$ associated with edge $e$, as described in (\ref{NC_EQ1a}), are i.i.d. and uniform over $\mathbb{F}_q$.

The network code $\mathcal{N}$ is said to be decodable at rate $R$ over network $\mathcal{G}$, if it satisfies the condition of (\ref{NC_EQ1}). We consider the following lemma which we prove in Section \ref{DEC}.

\begin{lemma}
 \label{LEM_DEC}
 Given integers $R,z$ that satisfy (\ref{BND1})-(\ref{BND3}) of Theorem \ref{TH1}, the random linear network coding scheme $\mathcal{N}$ is decodable at rate $R$ with probability at least $1 - \dfrac{2(|\mathcal{E}|+R+z)^2}{q}$.
\end{lemma}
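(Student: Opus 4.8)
The plan is to establish decodability of the random linear network code by appealing to the standard multi-source multicast framework applied to the modified network $\mathcal{G}^*$ described in Figure~\ref{SDAGR}. Recall that in $\mathcal{G}^*$ we attach a new terminal $T^*$ (colocated with or fed by the relevant cuts) and demand $(M,N)$ at both $T$ and $T^*$. The key observation is that in $\mathcal{G}$, the source $S$ can only transmit on its outgoing edges functions of $M$ together with the keys it has received from $K$; by the topological-order assumptions ($K=v_0$, $S=v_m$, $T=v_{n-1}$), the terminal $T$ receives a linear function of $(M,N)$, and decodability of $M$ at rate $R$ is implied by decodability of the full vector $(M,N)$ at the appropriate terminals of $\mathcal{G}^*$. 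So first I would make precise the reduction: decodability at rate $R$ in $\mathcal{G}$ follows once every relevant terminal in $\mathcal{G}^*$ can recover $(M,N)$, which is a multi-source multicast problem with two independent sources ($M$ at $S$, $N$ at $K$) of total rate $R+z$.

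Second, I would verify that the cut conditions \eqref{BND1}--\eqref{BND3} are exactly the min-cut conditions needed for feasibility of this multicast instance in $\mathcal{G}^*$. Concretely: the cut separating $K$ from $T$ must carry at least $z$ (the key rate) — this is part of \eqref{BND1}; the cut separating $S$ from $T$ must carry at least $R$ — this is \eqref{BND2}; and the joint cut separating $\{K,S\}$ from $T$ must carry at least $R+z$ — this is \eqref{BND3}. The condition $z \le C_{K-S}$ from \eqref{BND1} ensures $S$ actually receives enough key symbols to mix with $M$, i.e., it guarantees feasibility of the sub-instance of routing $N$ from $K$ to $S$. Once these min-cut conditions are checked against the edge-connectivity of $\mathcal{G}^*$, invoking the classical result that a random linear network code over a field $\mathbb{F}_q$ achieves any feasible multicast rate with failure probability at most $\frac{|\mathcal{E}|\cdot(\text{number of terminals})}{q}$ (or the sharper polynomial-in-the-parameters bounds of Ho et al.) yields the stated bound $1 - \frac{2(|\mathcal{E}|+R+z)^2}{q}$; the exact constant and the $(|\mathcal{E}|+R+z)^2$ form come from counting the edges and virtual source edges in $\mathcal{G}^*$ and applying the Schwartz--Zippel lemma to the product of the relevant transfer-matrix determinants.

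The main obstacle I anticipate is the bookkeeping in the reduction itself — specifically, being careful about \emph{where} the second terminal $T^*$ sits and which cuts in $\mathcal{G}^*$ correspond to $C_{K-S}$, $C_{K-T}$, $C_{S-T}$, and $C_{KS-T}$ in $\mathcal{G}$, since $\mathcal{G}^*$ introduces auxiliary edges whose capacities must be set so that the min-cuts in $\mathcal{G}^*$ collapse to exactly the four quantities above and no spurious new bottleneck is created. A second subtlety is handling the fact that $S$ does not generate $N$ but merely relays a function of it: one must argue that it suffices for $S$ to forward the raw received key symbols (or an invertible function thereof) so that the combined $(M,N)$ flow behaves like a genuine two-source multicast; this uses $z \le C_{K-S}$ to guarantee $S$ can losslessly recover $z$ key symbols. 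Modulo these structural checks, the probabilistic estimate is a routine union bound over the determinant conditions for each terminal, so the bulk of the work is in setting up $\mathcal{G}^*$ correctly rather than in the random-coding argument.
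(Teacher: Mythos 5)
Your proposal follows essentially the same route as the paper: reduce to a two-source multicast problem on the modified network $\mathcal{G}^*$ with a second terminal $T^*$ attached to $S$ by $R+z$ parallel edges, check that the bounds (\ref{BND1})--(\ref{BND3}) yield all the required source--terminal min-cuts in $\mathcal{G}^*$ (in particular $|\mincut_{\mathcal{G}^*}(K,T^*)|=\min(R+z,C_{K-S})\geq z$), invoke the standard random-linear multicast feasibility result with the $(|\mathcal{E}|+R+z)^2$ coefficient count, and then restrict the code back to $\mathcal{G}$. One small remark: the paper's reduction does not require $S$ to ``losslessly recover'' the $z$ key symbols --- $S$ merely forwards random linear mixtures of its incoming key combinations with $M$, and the multicast feasibility of $\mathcal{G}^*$ handles the rest --- but this does not affect the correctness of your argument.
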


We now consider a wiretapping adversary that can eavesdrop on any subset of edges $\mathcal{W} \subset \mathcal{E}$ such that $|\mathcal{W}| = z$. We denote the information gleaned by the adversary as $X_{\mathcal{W}}$ which may be expressed as 
 \begin{align}
     \label{SEC_EQ9}
     X_{\mathcal{W}} = \irow{\mathbf{A}_\mathcal{W} & \mathbf{B}_\mathcal{W}}\icol{M \\ N}
 \end{align}
 Here, $\mathbf{A}_\mathcal{W}$ and $\mathbf{B}_\mathcal{W}$ are $z \times R$ and $z \times z$ matrices whose rows are global encoding vectors associated with each edge in $\mathcal{W}$, acting on $M$ and $N$, respectively. We consider the network coded information to be secure if and only if (\ref{NC_EQ2}) holds for any $\mathcal{W} \subset \mathcal{E}$ of size $z$, i.e. the adversary gains no information about the source message symbols $M$ even after wiretapping a $z$-sized subset of edges in the network. In \cite{cai2007security}, Cai and Yeung show that a linear network coding scheme is secure if and only if the following condition holds.
  \begin{align}
     \label{SEC_EQ10}
     \rk(\irow{\mathbf{A}_\mathcal{W} & \mathbf{B}_\mathcal{W}}) = \rk(\mathbf{B}_\mathcal{W})
 \end{align}
 Here, $\rk(.)$ denotes the {\em rank} of a matrix.
 
 The following lemma is proven in Section \ref{SEC} by analyzing the matrices $\mathbf{A}_\mathcal{W}$ and $\mathbf{B}_\mathcal{W}$.
 
 \begin{lemma}
 \label{LEM_SEC}
  Given integers $R,z$ that satisfy (\ref{BND1})-(\ref{BND3}) of Theorem \ref{TH1}, the random linear network coding scheme $\mathcal{N}$ over $\mathcal{G}$ is $z$-secure with probability at least $1 - \dfrac{\binom{|\mathcal{E}|}{z}2z}{q}$ for all wiretap sets $\mathcal{W} \subset \mathcal{E}$ of size $z$.
\end{lemma}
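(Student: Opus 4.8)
The plan is to invoke the Cai--Yeung criterion (\ref{SEC_EQ10}), which turns $z$-security into a linear-algebraic condition on $\mathbf{A}_\mathcal{W}$ and $\mathbf{B}_\mathcal{W}$, to establish that condition for an arbitrary fixed wiretap set $\mathcal{W}$ of size $z$ with probability at least $1-2z/q$, and then to take a union bound over the at most $\binom{|\mathcal{E}|}{z}$ such sets. The first observation is that (\ref{SEC_EQ10}) is equivalent to the containment of column spans $\mathrm{colspan}(\mathbf{A}_\mathcal{W})\subseteq\mathrm{colspan}(\mathbf{B}_\mathcal{W})$ inside $\mathbb{F}_q^{z}$: adjoining the $M$-columns to $\mathbf{B}_\mathcal{W}$ leaves the rank unchanged exactly when every $M$-column already lies in the span of the $N$-columns. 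Since $\rk(\mathbf{B}_\mathcal{W})\le\rk(\irow{\mathbf{A}_\mathcal{W}&\mathbf{B}_\mathcal{W}})$ always holds, it suffices to prove the reverse inequality for each fixed $\mathcal{W}$.

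Fix $\mathcal{W}$. Every global encoding vector, hence every entry of $\mathbf{A}_\mathcal{W}$ and $\mathbf{B}_\mathcal{W}$, is a polynomial in the i.i.d.\ local coefficients $\{\bar c_{e',e}\}$, and the rank of a polynomial matrix equals the largest size of a nonvanishing minor. Consequently, for \emph{every} assignment of the coefficients $\rk(\mathbf{B}_\mathcal{W})$ is at most its generic value $\beta^{*}$ (the maximum of $\rk(\mathbf{B}_\mathcal{W})$ over all assignments) and $\rk(\irow{\mathbf{A}_\mathcal{W}&\mathbf{B}_\mathcal{W}})$ is at most its generic value $\gamma^{*}$; moreover a random assignment attains $\rk(\mathbf{B}_\mathcal{W})=\beta^{*}$ unless a fixed nonzero minor of $\mathbf{B}_\mathcal{W}$ vanishes, which by the Schwartz--Zippel lemma happens with probability at most $2z/q$ (after checking, as in the standard analysis of random linear network codes, that this minor has the relevant degree in $\{\bar c_{e',e}\}$). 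Trivially $\beta^{*}\le\gamma^{*}\le z$; one may also identify $\beta^{*}$ and $\gamma^{*}$ with max-flow (equivalently min-cut) quantities measuring how much independent information can be delivered onto the $z$ edges of $\mathcal{W}$ from $K$ alone, respectively from $\{K,S\}$.

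The heart of the matter is the combinatorial identity $\gamma^{*}=\beta^{*}$, and this is where the hypothesis $z\le C_{K-S}$ from (\ref{BND1}) and the topological ordering $K=v_0$, $S=v_m$ enter. New information can enter the network at $S$ only through the message $M$, while by $z\le C_{K-S}$ a flow of value $\min(z,C_{K-S})=z$ can be routed from $K$ into $S$; since any flow feeding the $z$ edges of $\mathcal{W}$ has value at most $z$, each of its paths that originates at $S$ can be prolonged backward along one of these $K\to S$ paths so as to originate at $K$. Equivalently, whatever rank the wiretapper attains about $(M,N)$ under some code, at least as large a rank about $N$ alone is attainable under a code that feeds $S$, in place of the message, an appropriate linear function of the (generically $z$-dimensional) key stream reaching $S$; hence $\gamma^{*}\le\beta^{*}$. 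Making this rerouting argument precise is the step I expect to be the main obstacle: ``flow into an edge set'' behaves less tamely than a node-to-node cut, and one must ensure the auxiliary $K\to S$ paths can be chosen edge-disjoint from one another and from the flow into $\mathcal{W}$; I would argue on a suitably augmented graph using the integral max-flow/min-cut theorem, with the bounds $z\le\min(C_{K-S},C_{K-T})$ and $R+z\le C_{KS-T}$ available as needed.

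Granting $\gamma^{*}=\beta^{*}$, the fixed-$\mathcal{W}$ statement follows immediately: on the event (of probability at least $1-2z/q$) that $\rk(\mathbf{B}_\mathcal{W})=\beta^{*}$ we get $\beta^{*}=\rk(\mathbf{B}_\mathcal{W})\le\rk(\irow{\mathbf{A}_\mathcal{W}&\mathbf{B}_\mathcal{W}})\le\gamma^{*}=\beta^{*}$, so all three ranks coincide, (\ref{SEC_EQ10}) holds, and $\I(M;X_\mathcal{W})=0$. A union bound over all wiretap sets $\mathcal{W}\subset\mathcal{E}$ with $|\mathcal{W}|=z$ then shows that $\mathcal{N}$ is $z$-secure with probability at least $1-\binom{|\mathcal{E}|}{z}\,2z/q$, as claimed.
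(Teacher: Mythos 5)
Your outer skeleton (the Cai--Yeung criterion (\ref{SEC_EQ10}), a per-set analysis, then a union bound over the $\binom{|\mathcal{E}|}{z}$ wiretap sets) is the same as the paper's, but your per-set argument takes a genuinely different route, and as written it has two gaps. The paper never compares generic ranks: it writes $\mathbf{B}_\mathcal{W}=(\mathbf{U}_K+\mathbf{U}_S\mathbf{B}_S\mathbf{V}_{\In(S)})\mathbf{B}_K$, imports $\rk(\mathbf{V}_{\In(S)})=z$ from the decodability argument (terminal $T^*$ must recover $N$), and shows $\rk(\mathbf{B}_\mathcal{W})=z$ outright (after the WLOG that the $z$ wiretap rows are independent) via Claims \ref{SEC_CL_1} and \ref{SEC_CL_2}. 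In your version, the identity $\gamma^*=\beta^*$ --- which you correctly single out as the heart of the matter --- is left unproven, and the backward-prolongation-of-paths sketch suffers exactly from the disjointness problem you flag: the $K\to S$ paths need not avoid the $K$-originating paths of the flow into $\mathcal{W}$. The identity is true and is better proved by cuts than by rerouting: any cut $C$ separating $K$ from $\mathcal{W}$ either also separates $S$ from $\mathcal{W}$, in which case $|C|\geq\mathrm{maxflow}(\{K,S\}\to\mathcal{W})\geq\gamma^*$, or it leaves an $S$-to-$\mathcal{W}$ path intact, in which case $C$ must separate $K$ from $S$ (else that path could be prepended by an uncut $K\to S$ path), so $|C|\geq C_{K-S}\geq z\geq\gamma^*$; max-flow/min-cut then gives $\mathrm{maxflow}(K\to\mathcal{W})\geq\gamma^*$, and hence $\beta^*\geq\gamma^*$ once you also supply the (standard, but not free) facts that $\gamma^*$ is at most the $\{K,S\}$-to-$\mathcal{W}$ max-flow and that $\beta^*$ attains $\min(z,\mathrm{maxflow}(K\to\mathcal{W}))$, e.g.\ by a pure routing assignment of the keys. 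A pleasant side effect of your formulation is that it needs no full-row-rank reduction on the wiretap matrix.

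The second gap is quantitative: the per-set probability $1-2z/q$ does not follow from Schwartz--Zippel as you invoke it. Entries of $\mathbf{B}_\mathcal{W}$ are polynomials in the local coefficients of degree up to the longest path length, so a $\beta^*\times\beta^*$ minor has total degree of order $z|\mathcal{E}|$, and Schwartz--Zippel only gives a per-set failure probability of order $z|\mathcal{E}|/q$, hence an overall bound like $1-\binom{|\mathcal{E}|}{z}z|\mathcal{E}|/q$ --- sufficient for achievability at large $q$, but weaker than the lemma's stated constant. The paper obtains $2z/q$ per set precisely by conditioning on every coefficient except the fresh i.i.d.\ entries of $\mathbf{B}_S$ and $\mathbf{B}_K$ chosen at $S$ and $K$ (with $\rk(\mathbf{V}_{\In(S)})=z$ already guaranteed on the decodability event paid for in Lemma \ref{LEM_DEC}), so only two rank events of probability at least $1-z/q$ each remain. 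To recover the stated bound you would need an analogous conditioning step (showing your generic value $\beta^*$ is attained using only the randomness at $S$ and $K$), or else weaken the probability claimed in the lemma.
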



A network code is said to be $(R,z)$-feasible if it is both $R$-feasible and $z$-secure. It now follows that, given integers $R$ and $z$ that satisfy (\ref{BND1}), (\ref{BND2}), and (\ref{BND3}), the suggested network code is $(R,z)$-feasible with probability at least 
$$\Big(1 - \dfrac{2(|\mathcal{E}|+R+z)^2 + \binom{|\mathcal{E}|}{z}2z}{q} \Big),$$
which, for sufficiently large $q$, implies our achievability with high probability.

\end{proof}

\section{Proof of Theorem \ref{TH1}: Converse}
\label{TH1_CONVa_body}
\begin{proof}
We prove the converse for any (not necessarily linear) $(R,z)$-feasible network code $\mathcal{N}$ over the network $\mathcal{G}$. We start with an $(R,z)$-feasible coding scheme and show that $R$ and $z$ satisfy the bounds of (\ref{BND1}), (\ref{BND2}) and (\ref{BND3}). 
Here, we give a partial proof in which we only address bound (\ref{BND1}). Proofs of a similar nature apply to the other bounds as well. Details of the converse proof, in the more general context of cyclic networks, appear in Appendix \ref{TH1_CONV}.

We denote by $\mathbb{C}_{K-S}$ the minimum cut separating $K$ and $S$, and by $C_{K-S}$ the total capacity of the edges in $\mathbb{C}_{K-S}$. The random variable $X_{K-S}$, over the support set $\mathcal{X}_{K-S}$, represents the information on all  edges of $\mathbb{C}_{K-S}$. We denote by $\mathcal{W}$ any subset of $z$ edges in $\mathcal{E}$ that is wiretapped by an eavesdropping adversary. Then $X_\mathcal{W}$ denotes the encoded information on all the edges in $\mathcal{W}$. We denote the set of edges that are incoming to $S$ as $\In(S)$, and the encoded information on all of the edges in $\In(S)$ as $X_{\In(s)}$ with support set $\mathcal{X}_{\In(S)}$. Similarly, for $\Out(S)$.

For the bound $z \leq \min(C_{K-S},C_{K-T})$ we consider two cases.
First, assume by contradiction that $z > C_{K-S}$. Specifically set $z = C_{K-S} + 1$. This implies that the eavesdropping adversary may choose to wiretap all the edges in $\mathbb{C}_{K-S}$ and an edge $e \in \Out(S)$ to obtain the wiretap set $\mathcal{W} = \mathbb{C}_{K-S} \cup \{e\}$ of size $z$. Then the wiretapped information is $X_\mathcal{W} = (X_{K-S},X_e)$, where $X_e$ is the information on the chosen edge $e$. Note that $X_e = \bar{f}_{e}(X_S)$, 
where, $X_{S} := (M, X_{\In(S)})$ is the information present at the source $S$. 

For $z$-security, we require that the mutual information $\I(M;X_\mathcal{W}) = 0$. Therefore,
\begin{align}
    \I(M;X_\mathcal{W}) &= \I(M;X_{K-S}) + \I(M;X_e|X_{K-S}) = 0,\nonumber
\end{align}
implying that,
$\I(M;X_{K-S}) = 0$ and 
$\I(M;X_e|X_{K-S}) = 0$.
Thus, we conclude that  
$\Hi(X_e|X_{K-S}) = \Hi(X_e|X_{K-S},M)$.

Suppose that cut $\mathbb{C}_{K-S}$ partitions $\mathcal{G}$ into disjoint sub-networks $\mathcal{A}$ and $\bar{\mathcal{A}}$, where $\mathcal{A}$ includes the key generating node $K$. Note that any information communicated through edges in $\bar{\mathcal{A}}$ must be a function of $X_{K-S}$. In addition, $\In(S) \subset \mathbb{C}_{K-S} \cup \mathcal{E}_{\bar{\mathcal{A}}}$, implying that all information reaching $S$ is a function of $X_{K-S}$. We conclude, for any edge $e \in \Out(S)$, that
\begin{align}
    \label{TH_CONV_EQ6a_1a}
    X_e &= h_e(M,X_{K-S}),
\end{align}
where, $h_e$ is some deterministic function. Equation (\ref{TH_CONV_EQ6a_1a}) implies that $\Hi(X_e|X_{K-S},M) = 0$ which in turn implies $\Hi(X_e|X_{K-S}) = 0$. This means that to be $z$-secure the information $X_{K-S}$ must completely determine $X_e$ for all $e \in \Out(S)$. Therefore, the information $X_{\Out(S)} := \{X_e\}_{e \in \Out(S)}$ is also a deterministic function of $X_{K-S}$. As $\I(M;X_{K-S}) = 0$ shows that $X_{K-S}$ is independent of $M$, it follows that $X_{\Out(S)}$ is also independent of $M$ and thus $\I(M;X_{\Out(S)}) = 0$. This, in turn, implies that the rate realizable by the network code $\mathcal{N}$ is $R=0$ which is a contradiction.

A similar proof holds for $z \leq C_{K-T}$, in which we study the set $\mathcal{W} = \mathbb{C}_{K-T} \cup \{e\}$ for any edge $e \in \In(T)$.
\end{proof}

\section{Proof of Lemmas}
\label{LEMPROOFS}
\subsection{Proof of Lemma \ref{LEM_DEC}}
\label{DEC}
\begin{lemmaproof}
We begin by considering the modified network $\mathcal{G}^* = (\mathcal{V}^*,\mathcal{E}^*)$, obtained from $\mathcal{G}$ as shown in Figure \ref{SDAGR}. Specifically, $\mathcal{G}^*$ is obtained from $\mathcal{G}$ by adding a new node $T^*$ and $R+z$ parallel edges from $S$ to $T^*$. As in $\mathcal{G}$, the network $\mathcal{G}^*$ has nodes $S$ and $K$ holding $R$ symbols of $M$ and $z$ symbols of $N$, respectively. Here, the outgoing edges of $S$ include those in the original network $\mathcal{G}$, denoted as $\Out(S)$, and the additional $R+z$ edges. Both terminals $T$ and $T^*$ want to decode all $R$ symbols of $M$ and $z$ symbols of $N$. A network code, over $\mathcal{G}^*$, that satisfies the demands of terminals $T$ and $T^*$ is a multi-source multicast network code which is $\mathbf{R}$-feasible, where $\mathbf{R} = (R,z)$.

We use a random linear multi-source multicast network code $\mathcal{N}^*$ over network $\mathcal{G}^*$ and the finite field $\mathbb{F}_q$. In what follows, we set some notation.
\begin{itemize}
    \item [1.] Let $O_K \triangleq |\Out(K)|$, $I_S \triangleq |\In(S)|$ and $O_S \triangleq |\Out(S)|$.
    \item [2.] The node $K$ transmits $z$ linear combinations of $N$ through $\Out(K)$. We express the information on these edges as $X_{\Out(K)} = \mathbf{B}_KN$. Here, the rows of $\mathbf{B}_K$, which is an $O_K \times z$ matrix, are the local encoding vectors associated with each edge in $\Out(K)$. The entries of $\mathbf{B}_K$ are i.i.d. and uniform over the field $\mathbb{F}_q$.
    \item [3.] The message source $S$ receives $I_S$ linear combinations of $N$ through the edges in $\In(S)$. We express the information on these edges as $X_{\In(S)} = \mathbf{V}_{In(S)}\mathbf{B}_KN$. $\mathbf{V}_{\In(S)}$ is an $I_S \times O_K$ matrix, and the rows of $\mathbf{V}_{In(S)}\mathbf{B}_K$ are the global encoding vectors, associated with each edge in $\In(S)$, acting on $N$.
    \item [4.] $S$ ``mixes" the received $I_S$ symbols of $X_{\In(S)}$ with the $R$ symbols of $M$ and transmits the resulting combinations through $\Out(S)$ and to $T^*$. We express the information on $\Out(S)$ as 
    \begin{align}
         X_{\Out(S)} &= \irow{\mathbf{A_S} & \mathbf{B}_S}\icol{M \\ \mathbf{V}_{In(S)}\mathbf{B}_KN}\nonumber\\
        &= \irow{\mathbf{A_S} & \mathbf{B}_S\mathbf{V}_{In(S)}\mathbf{B}_K}\icol{M \\ N}.\nonumber
    \end{align}
    Here, the rows of the matrix $\irow{\mathbf{A_S} & \mathbf{B}_S}$ are the local encoding vectors associated with the edges in $\Out(S)$. $\mathbf{A}_S$ and $\mathbf{B}_S$ are $O_S \times R$ and $O_S \times I_S$ matrices respectively. The entries of $\mathbf{A}_S$ and $\mathbf{B}_S$ are i.i.d. and uniform over $\mathbb{F}_q$.
\end{itemize}

We now consider the following claims. Claim \ref{CL_RED} is proven in Appendix \ref{CLAIMPROOF_CL_RED}.
\begin{claim}
\label{CL_MC}
The multi-source multicast random linear network code $\mathcal{N}^*$, as described above, is $\mathbf{R}$-feasible over the network $\mathcal{G}^*$ with probability at least $ 1 - \dfrac{2(|\mathcal{E}|+R+z)^2}{q}$. 
\end{claim}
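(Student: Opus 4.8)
The plan is to read Claim~\ref{CL_MC} off the standard theory of random linear coding for multicast, once the relevant min-cuts of $\mathcal{G}^*$ are computed. Recall that $\mathcal{G}^*$ has two independent sources, $S$ of rate $R$ and $K$ of rate $z$, and two terminals $T,T^*$ each demanding $(M,N)$; adjoining a virtual super-source $\sigma$ connected to $S$ by $R$ and to $K$ by $z$ parallel unit-capacity edges makes this a single-source multicast instance of rate $h:=R+z$ from $\sigma$ to $\{T,T^*\}$, for which $\mathcal{N}^*$ (local coefficients i.i.d.\ uniform on $\mathbb{F}_q$) is exactly a random linear code. By the standard random linear multicast theorem it then suffices to verify that the min-cut from $\sigma$ to $t$ is at least $h$ for $t\in\{T,T^*\}$; by max-flow--min-cut this is equivalent to $\mincut_{\mathcal{G}^*}(\mathcal{S},t)\ge\sum_{v\in\mathcal{S}}r_v$ for every $\mathcal{S}\subseteq\{S,K\}$ and $t\in\{T,T^*\}$, with $r_S=R$, $r_K=z$.

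First I would dispose of $T$. Because $T^*$ has no outgoing edges and its only incoming edges are the $R+z$ parallel edges from $S$, adjoining $T^*$ and those edges changes no cut separating a node set from $T$; hence $\mincut_{\mathcal{G}^*}(S,T)=\mincut_{\mathcal{G}}(S,T)=C_{S-T}\ge R$ by~(\ref{BND2}), $\mincut_{\mathcal{G}^*}(K,T)=C_{K-T}\ge z$ by~(\ref{BND1}), and $\mincut_{\mathcal{G}^*}(\{S,K\},T)=C_{KS-T}\ge R+z$ by~(\ref{BND3}). For $T^*$: any cut separating $S$ from $T^*$, and a fortiori any cut separating $\{S,K\}$ from $T^*$, must delete all $R+z$ edges $S\to T^*$, so $\mincut_{\mathcal{G}^*}(S,T^*)=\mincut_{\mathcal{G}^*}(\{S,K\},T^*)=R+z$, which meets the requirements $\ge R$ and $\ge R+z$. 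Finally, a cut separating $K$ from $T^*$ of size below $R+z$ must leave some edge $S\to T^*$ intact and hence must already separate $K$ from $S$, whence $\mincut_{\mathcal{G}^*}(K,T^*)\ge\min(R+z,\,\mincut_{\mathcal{G}^*}(K,S))=\min(R+z,\,C_{K-S})\ge z$, again by~(\ref{BND1}). All six subset-cut inequalities hold, so the min-cut from $\sigma$ to each of $T,T^*$ is at least $h$.

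Given these bounds, a valid linear multicast code for the augmented network exists, so for each terminal $t$ the min-cut condition ensures that some $h\times h$ submatrix $M_t$ of the matrix carrying $(M,N)$ to $t$'s inputs has $\det M_t\not\equiv 0$ as a polynomial in the i.i.d.\ local coding coefficients. Each entry of that matrix is a sum of path-gain monomials over paths in the DAG $\mathcal{G}^*$, so it has degree polynomially bounded in $|\mathcal{E}^*|=|\mathcal{E}|+R+z$; a direct count gives $\deg\det M_t\le(|\mathcal{E}|+R+z)^2$. By the Schwartz--Zippel lemma $\Pr[\det M_t=0]\le(|\mathcal{E}|+R+z)^2/q$, and a union bound over the two terminals shows $\mathcal{N}^*$ decodes $(M,N)$ at both $T$ and $T^*$ --- i.e.\ is $\mathbf{R}$-feasible over $\mathcal{G}^*$ --- with probability at least $1-\tfrac{2(|\mathcal{E}|+R+z)^2}{q}$, as claimed.

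The only genuinely non-routine step is the cut bookkeeping in the second paragraph: one must confirm that the gadget consisting of $T^*$ and its $R+z$ incoming edges shrinks no cut of $\mathcal{G}$, and then match each of the six subset-cut conditions for $T$ and $T^*$ to one of~(\ref{BND1})--(\ref{BND3}). The least transparent match is the $K$--$T^*$ cut, which is forced through $S$ and therefore reduces to $C_{K-S}$; everything else follows directly from the standard random linear multicast machinery together with the Schwartz--Zippel estimate.
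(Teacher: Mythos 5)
Your proposal is correct, and its skeleton coincides with the paper's: you verify exactly the six subset-cut conditions that the paper records in (\ref{CLAIMPROOF_CL_MC_EQ1})--(\ref{CLAIMPROOF_CL_MC_EQ6}) (including the slightly delicate $K$--$T^*$ cut, which you correctly force through $S$ to get $\min(R+z,C_{K-S})\ge z$), and then convert min-cut feasibility into a success probability for the random code. The difference lies in how that probability is obtained. The paper treats $\mathcal{G}^*$ directly as a multi-source multicast instance and cites Theorem 8 of Koetter--M\'edard together with the random-coding bound of Ho et al., giving success probability at least $(1-2/q)^L$ with $L\le|\mathcal{E}^*|^2=(|\mathcal{E}|+R+z)^2$ the number of local coefficients, so the factor $2$ comes from the per-coefficient bound. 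You instead make the super-source reduction explicit, argue nonvanishing of the terminal transfer determinants from multicast solvability, bound $\deg\det M_t\le(R+z)\,|\mathcal{E}^*|\le(|\mathcal{E}|+R+z)^2$, and apply Schwartz--Zippel with a union bound over the two terminals, so your factor $2$ counts terminals rather than coefficients. Both routes land on the same bound $1-2(|\mathcal{E}|+R+z)^2/q$; yours is more self-contained (only the classical multicast existence theorem plus Schwartz--Zippel), while the paper's citation-based argument scales more transparently if the number of terminals were to grow, since its loss is per coefficient rather than per terminal.
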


\begin{proof}[Proof of Claim \ref{CL_MC}]
\label{CLAIMPROOF_CL_MC}
Given integers $R$ and $z$, we start by observing the min-cut capacities in $\mathcal{G}^*$ between the subsets of the node set $\{S,K\}$ and each terminal $T$ and $T^*$ as follows.
\begin{align}
    \label{CLAIMPROOF_CL_MC_EQ1}
    &|\mincut_{\mathcal{G}^*}(K,T)| = C_{K-T} \geq z\\
    \label{CLAIMPROOF_CL_MC_EQ2}
    &|\mincut_{\mathcal{G}^*}(K,T^*)| = \min(R+z,C_{K-S}) \geq z \\
    \label{CLAIMPROOF_CL_MC_EQ3}
    &|\mincut_{\mathcal{G}^*}(S,T^*)| = R+z \geq R\\
    \label{CLAIMPROOF_CL_MC_EQ4}
    &|\mincut_{\mathcal{G}^*}(S,T)| = C_{S-T} \geq R\\
    \label{CLAIMPROOF_CL_MC_EQ5}
    &|\mincut_{\mathcal{G}^*}(\{K,S\},T^*)| = R+z \\
    \label{CLAIMPROOF_CL_MC_EQ6}
    &|\mincut_{\mathcal{G}^*}(\{K,S\},T)| = C_{KS-T} \geq R+z
\end{align}

From (\ref{CLAIMPROOF_CL_MC_EQ1})-(\ref{CLAIMPROOF_CL_MC_EQ6}), we see that for all source-terminal pairs in $\mathcal{G}^*$, the corresponding Min-Cut Max-Flow bounds are satisfied.

Let $L$ be the total number of encoding coefficients employed over all the edges in $\mathcal{E}^*$. We can bound $L$ by $\sum_{e \in \mathcal{E}^*}|\mathcal{E}^*| \leq |\mathcal{E}^*|^2 = (|\mathcal{E}|+R+z)^2$. Using Theorem 8 of \cite{koetter2003algebraic} and Theorem 5.4 of \cite{8187170} (derived from \cite{ho2006random}), we have that the network code $\mathcal{N}^*$ is $\mathbf{R}$-feasible over the network $\mathcal{G}^*$ with probability at least
\begin{align}
    \Big(1 - \dfrac{2}{q}\Big)^L &> 1 - \dfrac{2L}{q}
    > 1 - \dfrac{2(|\mathcal{E}|+R+z)^2}{q}\nonumber
\end{align}

This proves the claim.

\end{proof}

\begin{claim}
\label{CL_RED}
The $\mathbf{R}$-feasible network code $\mathcal{N}^*$ over $\mathcal{G}^*$, when restricted to $\mathcal{G}$, implies that $\mathcal{N}$ is $R$-decodable over $\mathcal{G}$.
\end{claim}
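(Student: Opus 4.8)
The plan is to show that the code $\mathcal{N}$ obtained by ``restricting'' $\mathcal{N}^*$ to the edges of $\mathcal{G}$ is literally the same assignment of local encoding functions on those edges, and that the surgery turning $\mathcal{G}$ into $\mathcal{G}^*$ leaves the decoder at $T$ completely untouched. First I would record the structural relationship between the two networks: $\mathcal{E}\subset\mathcal{E}^*$, the only edges of $\mathcal{G}^*$ not in $\mathcal{G}$ are the $R+z$ parallel edges from $S$ to $T^*$, and $T^*$ is a new sink with no outgoing edges. In particular, adding these edges only enlarges $\Out(S)$ and creates $\In(T^*)$, so $\In_{\mathcal{G}^*}(v)=\In_{\mathcal{G}}(v)$ for every $v\in\mathcal{V}$, and crucially $\In_{\mathcal{G}^*}(T)=\In_{\mathcal{G}}(T)$.

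Next I would verify that the restriction is a well-defined network code for $\mathcal{G}$. For each $e=(u,v)\in\mathcal{E}$, the local encoding function $\bar f_e$ supplied by $\mathcal{N}^*$ takes as input $X_{\In_{\mathcal{G}^*}(u)}$ (together with $M$ if $u=S$, and $N$ if $u=K$), and by the previous step $\In_{\mathcal{G}^*}(u)=\In_{\mathcal{G}}(u)\subseteq\mathcal{E}$. Hence the collection $\{\bar f_e\}_{e\in\mathcal{E}}$ together with the decoder $g_T$ of $\mathcal{N}^*$ is a bona fide network code $\mathcal{N}$ for $\mathcal{G}$. Processing the edges of $\mathcal{G}$ in the topological order inherited from $\mathcal{G}^*$ and inducting along it, one sees that for every realization of $(M,N)$ the edge message $X_e$ produced by $\mathcal{N}$ on $\mathcal{G}$ equals the edge message produced by $\mathcal{N}^*$ on $\mathcal{G}^*$, for all $e\in\mathcal{E}$; the new edges cannot influence any such $X_e$ because they feed only the sink $T^*$.

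Finally I would invoke $\mathbf{R}$-feasibility of $\mathcal{N}^*$ with $\mathbf{R}=(R,z)$: terminal $T$ decodes $(M,N)$ with zero error from $X_{\In_{\mathcal{G}^*}(T)}$, i.e.\ $g_T(X_{\In_{\mathcal{G}^*}(T)})=(M,N)$. Because $\In_{\mathcal{G}^*}(T)=\In_{\mathcal{G}}(T)$ and the edge messages on $\In(T)$ agree in $\mathcal{G}$ and $\mathcal{G}^*$, composing $g_T$ with the coordinate projection $(M,N)\mapsto M$ recovers $M$ at $T$ over $\mathcal{G}$ with zero error, which is exactly condition (\ref{NC_EQ1}). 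Thus $\mathcal{N}$ is $R$-decodable over $\mathcal{G}$.

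There is no deep difficulty here: the whole content is the bookkeeping observation that the modification adding $T^*$ introduces only out-edges of $S$ and a fresh sink, hence changes neither the in-neighborhood of $T$ nor the messages carried on the original edges. The one point that genuinely must be checked rather than asserted is that the restricted assignment $\{\bar f_e\}_{e\in\mathcal{E}}$ remains a \emph{causal} (well-defined) network code for $\mathcal{G}$, which is why I would make the identity $\In_{\mathcal{G}^*}(u)=\In_{\mathcal{G}}(u)$ explicit at the outset.
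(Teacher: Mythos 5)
Your proposal is correct and follows essentially the same route as the paper: restrict $\mathcal{N}^*$ to $\mathcal{G}$ by dropping the $S$--$T^*$ edges while keeping the encoding coefficients and the mixing at $S$ unchanged, then observe that $T$'s decoding of $(M,N)$ under $\mathcal{N}^*$ immediately yields decoding of $M$ under $\mathcal{N}$. Your version merely spells out the bookkeeping (unchanged in-neighborhoods and identical edge messages on $\mathcal{E}$) that the paper's proof leaves implicit.
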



From Claim \ref{CL_MC} and Claim \ref{CL_RED}, we have that the network code $\mathcal{N}$ is $R$-decodable over $\mathcal{G}$ with probability at least
\begin{align}
    1 - \dfrac{2(|\mathcal{E}|+R+z)^2}{q}\nonumber
\end{align}

This proves the lemma.
\end{lemmaproof}

\subsection{Proof of Lemma \ref{LEM_SEC}}
\label{SEC}
\begin{lemmaproof}
We use the notation introduced in the proof of Lemma \ref{LEM_DEC}. For any edge $e \in \mathcal{E}$, we express the information on $e$ as,
\begin{align}
    \label{SEC_EQ3a}
    X_{e} &= u_{e}\icol{X_{\Out(K)} \\ X_{\Out(S)}} = u_{e}\begin{bmatrix}
    \mathbf{0} & \mathbf{B}_K\\
    \mathbf{A}_S & \mathbf{B}_S\mathbf{V}_{\In(S)}\mathbf{B}_K
    \end{bmatrix}\icol{M \\ N}
\end{align}
Here, $u_{e}$ is an edge-$e$ encoding vector of dimension $O_K+O_S$, acting on $X_{\Out(K)}$ and $X_{\Out(S)}$. We partition $u_e = \irow{u_K & u_S}$ such that the $O_K$-dimensional vector $u_K$ acts on the information from $\Out(K)$ and the $O_S$-dimensional vector $u_S$ acts on the information from $\Out(S)$. Thus, we rewrite \eqref{SEC_EQ3a} as follows.
\begin{align}
    \label{SEC_EQ3b}
    X_{e} &= \irow{u_K & u_S}\begin{bmatrix}
    \mathbf{0} & \mathbf{B}_K\\
    \mathbf{A}_S & \mathbf{B}_S\mathbf{V}_{\In(S)}\mathbf{B}_K
    \end{bmatrix}\icol{M \\ N}
\end{align}

We now consider an adversary that wiretaps any subset $\mathcal{W} \subset \mathcal{E}$ of edges such that $|\mathcal{W}| = z$. Then, using \eqref{SEC_EQ3b}, we obtain the information observed by the adversary as follows.
\begin{align}
    \label{SEC_EQ4a}
    X_{\mathcal{W}} &= \irow{\mathbf{U}_K & \mathbf{U}_S}\icol{X_{\Out(K)} \\ X_{\Out(S)}}
\end{align}
Here, $\irow{\mathbf{U}_K & \mathbf{U}_S}$ is a $z \times (O_K+O_S)$ matrix where $\mathbf{U}_K$ is a $z \times O_K$ matrix and $\mathbf{U}_S$ is a $z \times O_S$ matrix. We assume that $\irow{\mathbf{U}_K & \mathbf{U}_S}$ has full row-rank of $z$, as otherwise, the adversary could simply drop an edge in $\mathcal{W}$ and not lose any information. Using \eqref{SEC_EQ3b}, we rewrite \eqref{SEC_EQ4a} as follows.
\begin{align}
    \label{SEC_EQ4}
    X_\mathcal{W} &= \begin{bmatrix}
    \mathbf{U}_S\mathbf{A}_S & \mathbf{U}_K\mathbf{B}_K + \mathbf{U}_S\mathbf{B}_S\mathbf{V}_{\In(S)}\mathbf{B}_K
    \end{bmatrix}\icol{M \\ N}
\end{align}
From \eqref{SEC_EQ9} and \eqref{SEC_EQ4}, we have that
\begin{align}
    \mathbf{A}_\mathcal{W} &= 
    \mathbf{U}_S\mathbf{A}_S \quad \text{and} \quad
    \mathbf{B}_\mathcal{W} = \begin{bmatrix}
    \mathbf{U}_K + \mathbf{U}_S\mathbf{B}_S\mathbf{V}_{\In(S)}
    \end{bmatrix}\mathbf{B}_K\nonumber
\end{align}
Let,
\begin{align}
    \label{SEC_EQ6}
    \mathbf{\Phi} &\triangleq \irow{
    \mathbf{U}_K + \mathbf{U}_S\mathbf{B}_S\mathbf{V}_{\In(S)}
    }.
\end{align}
From our decodability proof, we know that $\rk(\mathbf{V}_{\In(S)}) = z$, as otherwise, $T^*$ could not have decoded the keys $N$. For the security condition of \eqref{SEC_EQ10} to hold, we show that $\rk(\mathbf{B}_\mathcal{W}) = \rk(\mathbf{\Phi}\mathbf{B}_K) = z$. Therefore, we compute the following.
\begin{align}
    \label{SEC_EQ7}
    &\Pr_{\mathbf{B}_K,\mathbf{B}_S}\{\rk(\mathbf{B}_\mathcal{W}) = z\} = \nonumber\\
    &\quad \quad \Pr_{\mathbf{B}_S}\{\rk(\mathbf{\Phi}) = z\}
    \Pr_{\mathbf{B}_K}\{\rk(\mathbf{\Phi}\mathbf{B}_K) = z|\rk(\mathbf{\Phi}) = z\}.
\end{align}

We now consider the following claims proven in Appendix \ref{CLAIMPROOF_SEC_CL_1} and Appendix  \ref{CLAIMPROOF_SEC_CL_2}, respectively.

\begin{claim}
\label{SEC_CL_1}
$\Pr_{\mathbf{B}_S}\{\rk(\mathbf{\Phi}) = z\} > 1 - \dfrac{z}{q}$
\end{claim}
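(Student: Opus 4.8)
The goal is to show that $\mathbf{\Phi} = \mathbf{U}_K + \mathbf{U}_S \mathbf{B}_S \mathbf{V}_{\In(S)}$, a $z \times O_K$ matrix, has full row rank $z$ with high probability over the choice of $\mathbf{B}_S$. The plan is to exhibit a particular substitution of the random matrix $\mathbf{B}_S$ under which $\mathbf{\Phi}$ provably attains rank $z$, and then invoke the Schwartz–Zippel lemma: the determinant of any fixed $z \times z$ submatrix of $\mathbf{\Phi}$ is a polynomial of degree at most $z$ in the entries of $\mathbf{B}_S$ (each entry appears linearly, and we take a $z \times z$ minor), so if that polynomial is not identically zero it vanishes on at most a $z/q$ fraction of inputs. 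Hence $\Pr_{\mathbf{B}_S}\{\rk(\mathbf{\Phi}) = z\} \ge 1 - z/q$, and strict inequality follows since the bound is not tight for the all-zero polynomial case, which does not occur here.

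First I would recall the structural facts already established: $\irow{\mathbf{U}_K & \mathbf{U}_S}$ has full row rank $z$ (the adversary's wiretap matrix, assumed full-rank WLOG), $\mathbf{V}_{\In(S)}$ is $I_S \times O_K$ with the rows of $\mathbf{V}_{\In(S)} \mathbf{B}_K$ spanning enough dimensions that $\rk(\mathbf{V}_{\In(S)}) = z$ (from the decodability proof, since $T^*$ recovers all $z$ symbols of $N$ through $S$), and that $C_{K-S} \ge z$ so the cut into $S$ carries at least $z$ independent key combinations. The key step is to identify a witness value $\mathbf{B}_S^\star$ making $\mathbf{\Phi}(\mathbf{B}_S^\star)$ full rank. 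The natural choice: since $\rk\irow{\mathbf{U}_K & \mathbf{U}_S} = z$, pick columns so that after reordering the concatenation has an invertible $z\times z$ block; then choose $\mathbf{B}_S^\star$ so that $\mathbf{B}_S^\star \mathbf{V}_{\In(S)}$ "routes" the columns of $\mathbf{U}_S$ into a position where $\mathbf{U}_K + \mathbf{U}_S \mathbf{B}_S^\star \mathbf{V}_{\In(S)}$ inherits a full-rank $z \times z$ minor — using that $\mathbf{V}_{\In(S)}$ has rank $z$ so its row space is all of $\mathbb{F}_q^{?}$ restricted to a $z$-dimensional subspace, and $\mathbf{B}_S^\star$ has $O_S \times I_S$ free entries with $O_S \ge z$ (the adversary hits $z$ edges whose ultimate support flows through $\Out(S)$, or the $\mathbf{U}_S$ block is zero and we are already done via $\mathbf{U}_K$).

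The main obstacle I anticipate is the case analysis around when $\mathbf{U}_S$ is rank-deficient or zero. If $\mathbf{U}_S = \mathbf{0}$ then $\mathbf{\Phi} = \mathbf{U}_K$ must already have rank $z$, which forces $\rk(\mathbf{U}_K) = z$ and the claim holds deterministically — but one must check this is consistent with $\rk\irow{\mathbf{U}_K & \mathbf{U}_S} = z$. More delicate is when $\mathbf{U}_S$ has rank $r < z$: then only $r$ of the rows of $\mathbf{\Phi}$ can be "moved" by varying $\mathbf{B}_S$, and the remaining $z - r$ rows equal the corresponding rows of $\mathbf{U}_K$; the argument must show these $z-r$ fixed rows of $\mathbf{U}_K$ together with $r$ suitably perturbed rows can be made independent, which again leans on $\rk\irow{\mathbf{U}_K & \mathbf{U}_S} = z$ and on $\rk(\mathbf{V}_{\In(S)}) = z$ so that $\mathbf{B}_S \mathbf{V}_{\In(S)}$ can realize an arbitrary linear map on that $r$-dimensional image. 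Once the witness $\mathbf{B}_S^\star$ is constructed in each case, the Schwartz–Zippel bound finishes the proof uniformly.
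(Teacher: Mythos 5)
Your Schwartz--Zippel route is correct in outline and genuinely different from the paper's argument. The paper works with the distribution of $\mathbf{\Phi}$ itself: it extracts an invertible $z\times z$ block $\irow{\mathbf{\bar U}_K & \mathbf{\bar U}_S}$ from $\irow{\mathbf{U}_K & \mathbf{U}_S}$, factors $\mathbf{\Phi}=\irow{\mathbf{\bar U}_K & \mathbf{\bar U}_S}\mathbf{Q}$, proves (using an invertible $z\times z$ block $\mathbf{V}$ of $\mathbf{V}_{\In(S)}$) that the $\sigma_S$ ``random'' rows of $\mathbf{Q}$ are i.i.d.\ uniform over $\mathbb{F}_q^z$, and then multiplies step-by-step probabilities that each such row avoids the span of the $\sigma_K$ fixed rows and the previous ones, giving $\prod_{l=0}^{\sigma_S-1}\bigl(1-q^{-(z-\sigma_K-l)}\bigr)>1-z/q$. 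You instead need only one good evaluation point: every entry of $\mathbf{\Phi}=\mathbf{U}_K+\mathbf{U}_S\mathbf{B}_S\mathbf{V}_{\In(S)}$ is affine in the entries of $\mathbf{B}_S$, so a fixed $z\times z$ minor has degree at most $z$, and a witness $\mathbf{B}_S^\star$ with $\rk(\mathbf{\Phi}(\mathbf{B}_S^\star))=z$ gives the bound; this spares you the exact-uniformity computations (and the paper's simultaneous decompositions $\mathbf{U}_K=\mathbf{\bar U}_K\mathbf{\Delta}$, $\mathbf{U}_S=\irow{\mathbf{\bar U}_S & \mathbf{\bar U}_S\mathbf{\Gamma}}$), at the price of constructing the witness, which is the one step you leave at sketch level. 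It does go through: with $r=\rk(\mathbf{U}_S)$, choose an invertible row transformation $P$ making the last $z-r$ rows of $P\mathbf{U}_S$ zero; those rows of $P\mathbf{\Phi}$ equal the corresponding rows of $P\mathbf{U}_K$ and are linearly independent precisely because $\rk\irow{\mathbf{U}_K & \mathbf{U}_S}=z$ (note your statement that ``the remaining $z-r$ rows equal rows of $\mathbf{U}_K$'' is only true after this change of row basis); since the top $r\times O_S$ block of $P\mathbf{U}_S$ has a right inverse, the top $r$ rows of the perturbation can be made arbitrary vectors of the $z$-dimensional row space of $\mathbf{V}_{\In(S)}$, whose image modulo the span $W$ of the $z-r$ fixed rows has dimension at least $z-(z-r)=r$, so one can greedily complete to $z$ independent rows (the degenerate case $\mathbf{U}_S=\mathbf{0}$ forces $\rk(\mathbf{U}_K)=z$ and is trivial, as you note). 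One caveat: Schwartz--Zippel yields $\Pr_{\mathbf{B}_S}\{\rk(\mathbf{\Phi})=z\}\ge 1-z/q$, and your one-line justification of the strict inequality is not valid (equality in Schwartz--Zippel is not excluded merely because the polynomial is nonzero); however, the non-strict bound suffices for every subsequent use of the claim, or one can fall back on the paper's product computation to recover strictness.
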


\begin{claim}
\label{SEC_CL_2}
Given an $n \times m$ matrix $\mathbf{A}$ and an $m \times n$ matrix $\mathbf{B}$ such that $\rk(\mathbf{A}) = n$ and the entries of $\mathbf{B}$ are i.i.d. and uniform over the field $\mathbb{F}_q$, then $\rk(\mathbf{A}\mathbf{B}) = n$ with probability at least $1 - \dfrac{n}{q}$, over $\mathbf{B}$.
\end{claim}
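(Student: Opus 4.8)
The plan is to reduce the claim to a standard fact about random vectors over $\mathbb{F}_q$. First, since $\rk(\mathbf{A}) = n$ and $\mathbf{A}$ is $n \times m$ (so $m \ge n$), the linear map $x \mapsto \mathbf{A}x$ from $\mathbb{F}_q^m$ onto $\mathbb{F}_q^n$ is surjective; every fiber therefore has size $q^{m-n}$, so when $x$ is uniform on $\mathbb{F}_q^m$ the image $\mathbf{A}x$ is uniform on $\mathbb{F}_q^n$. Writing $\mathbf{B} = \irow{b_1 & \cdots & b_n}$ columnwise, the columns of $\mathbf{A}\mathbf{B}$ are $\mathbf{A}b_1, \dots, \mathbf{A}b_n$, and since the $b_i$ are i.i.d.\ uniform on $\mathbb{F}_q^m$ these $n$ vectors are i.i.d.\ uniform on $\mathbb{F}_q^n$. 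Thus it suffices to lower bound the probability that $n$ i.i.d.\ uniform vectors in $\mathbb{F}_q^n$ are linearly independent.

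Next, I would estimate that probability by a sequential exposure argument: conditioned on the first $k$ of these vectors being linearly independent, they span a $k$-dimensional subspace containing $q^k$ points, so the $(k+1)$-th vector fails to extend the independent set with probability exactly $q^{k}/q^{n}$. Multiplying the surviving probabilities over $k = 0, 1, \dots, n-1$ yields
\[
\Pr\{\rk(\mathbf{A}\mathbf{B}) = n\} \;=\; \prod_{k=0}^{n-1}\left(1 - q^{k-n}\right) \;=\; \prod_{j=1}^{n}\left(1 - q^{-j}\right).
\]

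Finally, I would invoke the elementary bound $\prod_i (1 - x_i) \ge 1 - \sum_i x_i$ for $x_i \in [0,1]$, together with $q^{-j} \le q^{-1}$ for all $j \ge 1$, to get
\[
\prod_{j=1}^{n}\left(1 - q^{-j}\right) \;\ge\; 1 - \sum_{j=1}^{n} q^{-j} \;\ge\; 1 - \frac{n}{q},
\]
as claimed. There is no serious obstacle in this argument; the only steps needing a line of justification are that a surjective linear map carries the uniform distribution to the uniform distribution — which is what makes the columns of $\mathbf{A}\mathbf{B}$ genuinely i.i.d.\ uniform — and the conditioning bookkeeping in the sequential exposure, both of which are routine.
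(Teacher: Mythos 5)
Your proof is correct and follows essentially the same route as the paper: both arguments first show that the columns of $\mathbf{A}\mathbf{B}$ are i.i.d.\ uniform over $\mathbb{F}_q^n$ (the paper via an explicit coordinate computation conditioning on the entries of $b_j$ outside $n$ linearly independent columns of $\mathbf{A}$, you via the cleaner fiber-counting observation that a surjective linear map preserves uniformity), and then bound below the probability that $n$ i.i.d.\ uniform vectors in $\mathbb{F}_q^n$ are linearly independent by the product $\prod_{j=1}^{n}\bigl(1-q^{-j}\bigr) \geq 1-\tfrac{n}{q}$. The only differences are cosmetic (sequential exposure versus direct counting of independent tuples, and a union-bound versus Bernoulli-type final estimate), so no further comment is needed.
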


Let us consider the following event.
\begin{itemize}
    \item $\mathbb{E}_\mathcal{W}$: The condition of (\ref{SEC_EQ10}) holds for a given wiretap set $\mathcal{W}$ of size $z$.
\end{itemize}
Using Claim \ref{SEC_CL_1} and Claim \ref{SEC_CL_2} we conclude from \eqref{SEC_EQ7} that
\begin{align}
    \label{SEC_EQ_9}
    \Pr_{\mathbf{B}_K,\mathbf{B}_S}\{\mathbb{E}_\mathcal{W}\} &> \Big(1 -  \dfrac{z}{q}\Big)^2 > 1 -  \dfrac{2z}{q}
\end{align}
Denoting the complementary event of $\mathbb{E}_\mathcal{W}$ by $\Bar{\mathbb{E}}_\mathcal{W}$ and using the union bound over event $\Bar{\mathbb{E}}_\mathcal{W}$ for any $\mathcal{W} \subset \mathcal{E}$ of size $z$, we have the following.
\begin{align}
    \Pr\{\bigcup_{\mathcal{W} \subset \mathcal{E}}\Bar{\mathbb{E}}_\mathcal{W}\} \quad & \leq \sum_{\mathcal{W} \subset \mathcal{E}}\dfrac{2z}{q} = \dfrac{\binom{|\mathcal{E}|}{z}2z}{q}.\nonumber
\end{align}
Namely, the probability over the i.i.d. entries of $\mathbf{B}_S$ and $\mathbf{B}_K$, of the network code being secure against an adversary with a wiretap set $\mathcal{W}$ of size $z$ is at least $1 - \dfrac{\binom{|\mathcal{E}|}{z}2z}{q}$. This proves the lemma.
\end{lemmaproof}

\section{Conclusion}
\label{CONC}
In this paper, we characterize the capacity-security region for single unicast network codes over a directed acyclic network in which only one node, which is not necessarily the source node, can generate random keys. We present a random linear achievability proof and a matching coverse proof. Our converse can be extended to cyclic networks as well. (Details appear in Appendix \ref{TH1_CONV}.) Our work establishes an intermediate step between the well understood problem of characterizing the capacity-security region in which only the source node generates random keys and the problem of characterizing the capacity-security region when every node can generate random keys. 

Several problems are left open. An extension of our result to the context of multicast network coding is within reach and the subject of future research. It would also be interesting to extend our achievability to single unicast network coding over networks with cycles. Additional possible extensions include the study of single unicast networks in which more than one node can independently generate random keys. 

\section*{Acknowledgements}
Work supported in part by NSF grants CCF-1526771 and CCF-1817241.

\bibliographystyle{IEEEtran}
\bibliography{ref}

\begin{appendices}
\section{Proof of Theorem \ref{TH1}: Converse (For cyclic networks)}
\label{TH1_CONV}
\begin{proof}
We prove the converse for the more general setting of directed networks $\mathcal{G}_{\rm cyc} = (\mathcal{V}_{\rm cyc},\mathcal{E}_{\rm cyc})$ that may contain cycles.
As before, $\mathcal{G}_{\rm cyc}$ has the message generating source node $S$, the random key generating node $K$ and the terminal $T$, as shown in Figure \ref{CS1}. We show that for such a network and for any network coding scheme, the bounds given in (\ref{BND1}), (\ref{BND2}) and (\ref{BND3}) are upper bounds for the capacity-security region.

As we address networks with cycles, we consider the notion of {\em time} in our definition of a network code.
Namely, we consider an $n$-time step system. In such a system, we assume that communication starts at time step $i=1$. The source $S$ holds a message $M$ uniformly distributed in $[q^{nR}]$ and node $K$ holds random keys $N$ uniformly distributed in $[q^{R_K}]$, where $R_K$ is not restricted in any way. For any edge $e \in \mathcal{E}$ such that $e = (u,v)$, where $u,v \in \mathcal{V}_{\rm cyc}$, we define the information on $e$ at the $i$-th time step, for all $i \in [n] = \{1,\cdots,n-1\}$, as 
\begin{align}
\label{CYC_2}
X_e^{(i)} &= \bar{f}_e^{(i)}(\{X_{e'}^{(j)}\}_{e' \in \In(u), j \in [i-1]})
\end{align}

Here, $\bar{f}_e^{(i)}$ is the time-variant local encoding function at edge $e$ at the $i$-th time step, $\In(u)$ denotes the set of incoming edges in $\mathcal{E}$ to node $u$ and $[i-1] = \{1,\cdots,i-1\}$. In our work, we consider $X_e^{(i)}$ to be a random variable with the support set $\mathcal{X}_e$, for all $i \in [n]$. For a given cut $\mathbb{C}$, we denote by $X_\mathbb{C}^{(i)}$ the composite of the variables corresponding to edges $e \in \mathbb{C}$ at time step $i$, i.e. $X_\mathbb{C}^{(i)} = (\{X_e^{(i)}\}_{e \in \mathbb{C}})$. The support set of $X_\mathbb{C}^{(i)}$ for all $i \in [n]$ is denoted by $\mathcal{X}_\mathbb{C}$. We use the notation $X_{e}^{[n]}$ to denote the information on edge $e$ for the $n$-time step system, i.e. $X_{e}^{[n]} := \{X_e^{(j)}\}_{j \in [n]}$.

For the network model $\mathcal{G}_{\rm cyc}$, the following definitions are useful for the discussions that follow:
\begin{itemize}
\item For any cut  $\mathbb{C}_{u-v}$, separating any two nodes $u,v \in \mathcal{V}$, we define two sub-networks $\mathcal{A} = (\mathcal{V}_\mathcal{A},\mathcal{E}_\mathcal{A})$ and $\bar{\mathcal{A}} = (\mathcal{V}_{\bar{\mathcal{A}}},\mathcal{E}_{\bar{\mathcal{A}}})$, with $u \in \mathcal{V}_\mathcal{A}$ and $v \in \mathcal{V}_{\bar{\mathcal{A}}}$, as shown in Figure \ref{CS1}. Here, $\mathcal{V} = \mathcal{V}_\mathcal{A} \cup \mathcal{V}_{\bar{\mathcal{A}}}$ and $\mathcal{E} = \mathcal{E}_\mathcal{A} \cup \mathcal{E}_{\bar{\mathcal{A}}} \cup \mathbb{C}_{u-v}$.
\item We denote by $\mathbb{C}_{K-S}$, the minimum cut separating $K$ and $S$ and by $C_{K-S}$, the total capacity of the edges in $\mathbb{C}_{K-S}$. The random variable $X_{K-S}^{(i)}$, over the support set $\mathcal{X}_{K-S}$, represents the information, at the $i$-th time step on all the edges of $\mathbb{C}_{K-S}$. $X_{K-S}^{[n]} = \{X_{K-S}^{[j]}\}_{j \in [n]}$ represents the information on all the edges of $\mathbb{C}_{K-S}$ for the $n$-time step system. We use similar notations for the time variant random variables which represent the information on the edges in $\mathbb{C}_{K-T}$, $\mathbb{C}_{S-T}$, and $\mathbb{C}_{KS-T}$.

\item We denote by $\mathcal{W}$ any subset of $z$ edges in $\mathcal{E}$ that is wiretapped by an eavesdropping adversary. $X_\mathcal{W}^{(i)}$ denotes the encoded information on all the edges in $\mathcal{W}$ at the $i$-th time step. We assume that the wiretap set $\mathcal{W}$ is time invariant, i.e. it does not change with the time step $i$. Thus, the information obtained by the adversary for the $n$-time step system is $X_{\mathcal{W}}^{[n]}$. 

\item We denote by $\In(S)$, the set of edges that are incoming to $S$. We denote the encoded information at the $i$-th time step on all the edges in $\In(S)$ as $X_{\In(s)}^{(i)}$ with support set $\mathcal{X}_{\In(S)}$. Similarly, for $\Out(S)$.

\item For any sub-graph $\mathcal{A} = (\mathcal{V}_\mathcal{A},\mathcal{E}_\mathcal{A}) \subset \mathcal{G}_{\rm cyc}$, let 

\begin{align}
    \label{SRC_DEF}
    \mathcal{S}_\mathcal{A} := \{S,K\} \cap \mathcal{V}_\mathcal{A}.
\end{align}
\end{itemize}

Given the definitions above, we start with an $(R,z)$-feasible coding scheme and show that $R$ and $z$ satisfy the bounds of (\ref{BND1}), (\ref{BND2}) and (\ref{BND3}). Here, a scheme is $(R,z)$-feasible with blocklength $n$ if $T$ can decode $M \in [q^{nR}]$ and any wiretapped subset of edge $\mathcal{W}$ hold no information on $M$. We shall now consider each of the bounds separately in the following subsections.

\begin{figure}[!t]
\centering
\includegraphics[width=2.1in, height=1.7in]{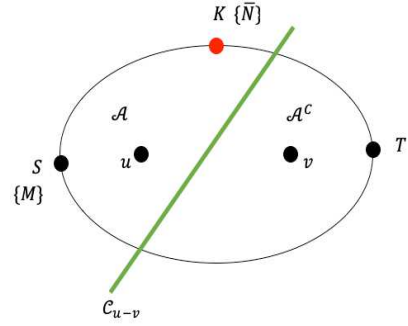}
\vspace{-0.0cm}
\caption{The partitioning of a network $\mathcal{G}$ due to the cut set $\mathbb{C}_{u-v}$.}
\label{CS1}
\end{figure}

\subsection{Bound on $z$: $z \leq \min(C_{K-S},C_{K-T})$}

\subsubsection{$\mathbf{z \leq C_{K-S}}$}

Suppose, by contradiction, that $z > C_{K-S}$. Particularly, assume $z = C_{K-S}+1$ This implies that the eavesdropping adversary may choose to wiretap all the edges in $\mathbb{C}_{K-S}$ and an edge $e \in \Out(S)$ to obtain the wiretap set $\mathcal{W} = \mathbb{C}_{K-S} \cup \{e\}$ of size $z$. Then the wiretapped information is $X_\mathcal{W}^{[n]} = (X_{K-S}^{[n]},X_e^{[n]})$, where $X_e^{[n]}$ is the information on the chosen edge $e$ for the $n$-time step system. From (\ref{CYC_2}), we see that
\begin{align}
\label{TH_CONV_EQ1}
X_e^{(i)} &= \bar{f}_e^{(i)}(\{X_S^{(j)}\}_{j \in [i-1]})\nonumber\\
&= \bar{f}_e^{(i)}(\{X_S^{[i-1]}\}).
\end{align}

Where, $X_{S}^{[i-1]} := (M, X_{\In(S)}^{[i-1]})$ is the information present at the source $S$ for all time steps up to the $(i-1)$-th time-step. For $z$-security, we require that the mutual information $\I(M;X_\mathcal{W}^{(i)}) = 0$. Therefore,

\begin{align}
\label{TH_CONV_EQ2}
\I(M;X_\mathcal{W}^{[n]}) &= \I(M;X_{K-S}^{[n]}) + \I(M;X_e^{[n]}|X_{K-S}^{[n]}) = 0.
\end{align}

Implying that,
\begin{align}
\label{TH_CONV_EQ3} 
\I(M;X_{K-S}^{[n]}) &= 0,\\
\label{TH_CONV_EQ4}
\I(M;X_e^{[n]}|X_{K-S}^{[n]}) &= 0.
\end{align}

From (\ref{TH_CONV_EQ4}), we obtain the following.

\begin{align}
    \label{TH_CONV_EQ5}
    \Hi(X_e^{[n]}|X_{K-S}^{[n]}) = \Hi(X_e^{[n]}|X_{K-S}^{[n]},M).
\end{align}

Suppose the cut $\mathbb{C}_{K-S}$ partitions $\mathcal{G}_{\rm cyc}$ into disjoint sub-networks $\mathcal{A}$ and $\bar{\mathcal{A}}$ . Then, as per the definition in (\ref{SRC_DEF}), $\mathcal{S}_{\bar{\mathcal{A}}} = \{S\}$. We denote by $X_{\bar{\mathcal{A}}}$, the source message $M$ and/or key $N$ held by the nodes in $\bar{\mathcal{A}}$. We see that $\In(S) \subset \mathbb{C}_{K-S} \cup \mathcal{E}_{\bar{\mathcal{A}}}$, which implies that any edge $e' \in \In(S)$ either belongs to the set $\In(S) \cap \mathbb{C}_{K-S}$ or the set $\In(S) \cap \mathcal{E}_{\bar{\mathcal{A}}}$.

For any edge $e' \in \In(S) \cap \mathbb{C}_{K-S}$, we observe that

\begin{align}
    \label{TH_CONV_EQ6a}
    X_{e'}^{[n]} = h_{e'}(X_{K-S}^{[n]}),
\end{align}

For $e' \in \In(S) \cap \mathcal{E}_{\bar{\mathcal{A}}}$, we consider the following lemma which we prove in Appendix \ref{CS_LEM}.





\begin{lemma}
\label{LEM_CONV}
For any cut $\mathbb{C}$ that partitions graph $\mathcal{G}_{\rm cyc}$ into disjoint sub-networks $\mathcal{A}$ and $\bar{\mathcal{A}}$, there exists, for any edge $e \in \mathcal{E}_{\bar{\mathcal{A}}}$ and any time step $i \in [n]$, a deterministic mapping $g_e^{(i)}$ such that $g_e^{(i)}(\{X_\mathbb{C}^{(j)}\}_{j \in [i-1]},X_{\bar{\mathcal{A}}}) = X_e^{(i)}$. 
\end{lemma}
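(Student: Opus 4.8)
The plan is to prove the statement by strong induction on the time step $i$, in the slightly strengthened form: for every $i \in [n]$, simultaneously for all $e \in \mathcal{E}_{\bar{\mathcal{A}}}$, the value $X_e^{(i)}$ is a deterministic function of $(\{X_\mathbb{C}^{(j)}\}_{j \in [i-1]}, X_{\bar{\mathcal{A}}})$. Before the induction I would record the one structural fact that drives everything: for any node $w \in \mathcal{V}_{\bar{\mathcal{A}}}$ we have $\In(w) \subseteq \mathcal{E}_{\bar{\mathcal{A}}} \cup \mathbb{C}$. Indeed, an incoming edge $e' = (x,w)$ with $x \in \mathcal{V}_{\bar{\mathcal{A}}}$ is internal to $\bar{\mathcal{A}}$, while one with $x \in \mathcal{V}_{\mathcal{A}}$ is internal to neither $\mathcal{A}$ (as $w \notin \mathcal{V}_{\mathcal{A}}$) nor $\bar{\mathcal{A}}$ (as $x \notin \mathcal{V}_{\bar{\mathcal{A}}}$), hence lies in $\mathbb{C}$ by the decomposition $\mathcal{E} = \mathcal{E}_\mathcal{A} \cup \mathcal{E}_{\bar{\mathcal{A}}} \cup \mathbb{C}$. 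So the only information entering the sub-network $\bar{\mathcal{A}}$ arrives either across the cut edges $\mathbb{C}$ or is injected locally by $S$ and/or $K$ when they lie in $\bar{\mathcal{A}}$; by (\ref{SRC_DEF}) that locally injected data is exactly (a part of) $X_{\bar{\mathcal{A}}}$.

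For the base case $i = 1$, the local encoding rule (\ref{CYC_2}) refers only to time steps in $[0] = \emptyset$, so for $e = (w,w') \in \mathcal{E}_{\bar{\mathcal{A}}}$ the value $X_e^{(1)}$ depends on nothing except, when $w \in \{S,K\}$, the message or key held at $w$, which is a coordinate of $X_{\bar{\mathcal{A}}}$; and $\{X_\mathbb{C}^{(j)}\}_{j \in [0]}$ is vacuous, so the claim holds. For the inductive step, fix $i$ and assume the claim for all time steps strictly less than $i$. For $e = (w,w') \in \mathcal{E}_{\bar{\mathcal{A}}}$, expand $X_e^{(i)}$ via (\ref{CYC_2}) as a function of $\{X_{e'}^{(j)}\}_{e' \in \In(w),\, j \in [i-1]}$ together with the message/key of $w$ if $w \in \{S,K\}$. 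By the structural fact each $e' \in \In(w)$ is either in $\mathbb{C}$, so that $X_{e'}^{(j)}$ is a coordinate of $X_\mathbb{C}^{(j)}$ with $j \in [i-1]$ and is therefore directly available, or in $\mathcal{E}_{\bar{\mathcal{A}}}$, so that the induction hypothesis (applicable since $j \leq i-1 < i$) expresses $X_{e'}^{(j)}$ as a function of $(\{X_\mathbb{C}^{(k)}\}_{k \in [j-1]}, X_{\bar{\mathcal{A}}})$ and $[j-1] \subseteq [i-1]$. The possible local message/key of $w$ is once more a coordinate of $X_{\bar{\mathcal{A}}}$. Composing these finitely many deterministic maps yields the required $g_e^{(i)}$, completing the induction.

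I do not expect a real obstacle here, because the time-indexed code definition (\ref{CYC_2}) ties the value at step $i$ to strictly earlier steps, so cycles in $\mathcal{G}_{\rm cyc}$ create no circular dependence and the induction on $i$ goes through as stated. The only point needing a little care — the closest thing to a difficulty — is the bookkeeping at the boundary nodes $S$ and $K$: one must confirm that whatever each of them injects is already accounted for in $X_{\bar{\mathcal{A}}}$ precisely when that node sits on the $\bar{\mathcal{A}}$ side, which is immediate from $\mathcal{S}_{\bar{\mathcal{A}}} = \{S,K\} \cap \mathcal{V}_{\bar{\mathcal{A}}}$ in (\ref{SRC_DEF}).
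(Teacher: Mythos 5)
Your proposal is correct and follows essentially the same route as the paper's own proof: induction on the time step, using the fact that every edge entering a node of $\bar{\mathcal{A}}$ lies in $\mathbb{C}$ or in $\mathcal{E}_{\bar{\mathcal{A}}}$, with the locally injected message/key data accounted for by $X_{\bar{\mathcal{A}}}$ via (\ref{SRC_DEF}). Your use of strong induction over all earlier time steps is in fact slightly more careful than the paper's write-up, since the encoding rule (\ref{CYC_2}) allows dependence on all of $[i-1]$ rather than only the immediately preceding step, but the underlying argument is the same.
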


Therefore, using Lemma \ref{LEM_CONV} for edge $e' \in \In(S) \cap \mathcal{E}_{\bar{\mathcal{A}}}$:
\begin{align}
    \label{TH_CONV_EQ7_1}
    X_{e'}^{[j]} &= g_{e'}^{(j)}(M, X_{K-S}^{[j-1]})\nonumber\\
    &= \bar{h}_{e'}(M, X_{K-S}^{[j-1]}),
\end{align}

where, $\bar{h}_{e'}$ is a deterministic function.

Then, using (\ref{TH_CONV_EQ7_1}) we obtain the information on $\In(S)$ as follows.

\begin{align}
     \label{TH_CONV_EQ7b}
     X_{\In(S)}^{[j-1]} &= \{X_{e'}^{[j-1]}\}_{e' \in \In(S) \cap \mathbb{C}_{K-S}} \cup \{X_{e'}^{[j-1]}\}_{e' \in \In(S) \cap \mathcal{E}_{\bar{\mathcal{A}}}}\nonumber\\
     &=\{h_{e'}(X_{K-S}^{[j-1]})\}_{e' \in \In(S) \cap \mathbb{C}_{K-S}} \nonumber\\
     & \text{ } \cup \{\bar{h}_{e'}(M, X_{K-S}^{[j-2]})\}_{e' \in \In(S) \cap \mathcal{E}_{\bar{\mathcal{A}}}}\nonumber\\
     &= \bar{h}_{\In(S)}(M, X_{K-S}^{[j-1]}).
\end{align}

Thus, for the chosen edge $e \in \Out(S)$, using (\ref{TH_CONV_EQ7b}) and (\ref{TH_CONV_EQ1}) we have
\begin{align}
    \label{TH_CONV_EQ7}
    X_e^{[n]} &= \{X_{e}^{(j)}\}_{j \in [n]}\nonumber\\
    &= \{\bar{f}_e^{(i)}(M,X_{\In(S)}^{[j-1]})\}_{j \in [n]}\nonumber\\
    &= \{\bar{f}_e^{(i)}(M,\bar{h}_{\In(S)}(M, X_{K-S}^{[j-1]}))\}_{j \in [n]}\nonumber\\
    &= f(M,X_{K-S}^{[n-1]}).
\end{align}

Thus, (\ref{TH_CONV_EQ7}) shows that $X_e^{[n]}$ is a deterministic function of $M$ and $X_{K-S}^{[n-1]}$. As $\Hi(X_e^{[n]}|X_{K-S}^{[n]},M) \leq \Hi(X_e^{[n]}|X_{K-S}^{[n-1]},M)$, this implies that 
\begin{align}
    \label{TH_CONV_EQ9}
    \Hi(X_e^{[n]}|X_{K-S}^{[n]},M) = 0.
\end{align}

Therefore by (\ref{TH_CONV_EQ5}) and (\ref{TH_CONV_EQ9}), we have,
\begin{align}
    \label{TH_CONV_EQ10}
    \Hi(X_e^{[n]}|X_{K-S}^{[n]}) = 0.
\end{align}

Thus, to be $z$-secure, (\ref{TH_CONV_EQ10}) shows that for all $e \in \Out(S)$, the random variable $X_e^{[n]}$ must be completely determined by $X_{K-S}^{[n]}$. Therefore, the information $X_{\Out(S)}^{[n]} := \{X_e^{[n]}\}_{e \in \Out(S)}$ is also a deterministic function of $X_{K-S}^{[n]}$. As (\ref{TH_CONV_EQ3}) shows that $X_{K-S}^{[n]}$ is independent of the message symbols $M$, it follows that $X_{\Out(S)}^{[n]}$ is also independent of $M$ and thus
\begin{align}
\label{TH_CONV_EQ11}
\I(M;X_{\Out(S)}^{[n]}) = 0.
\end{align}

Equation (\ref{TH_CONV_EQ11}), in turn implies that the rate realizable by the network code $\mathcal{N}$ is $R = 0$ which is a contradiction. 

\subsubsection{$\mathbf{z \leq C_{K-T}}$}

Suppose, by contradiction, that $z > C_{K-T}$, specifically assuming that $z = C_{K-T}+1$. This implies that the eavesdropping adversary may choose to wiretap all the edges in $\mathbb{C}_{K-T}$ and any edge $e \in \In(T)$ to obtain the wiretapped set $\mathcal{W} = \mathbb{C}_{K-T} \cup \{e\}$ of size $z$. Then the wiretapped information is $X_\mathcal{W}^{[n]} = (X_{K-T}^{[n]},X_e^{[n]})$, where $X_e^{[n]}$ is the information on the chosen edge $e$. For $z$-security, we require that the mutual information $\I(M;X_\mathcal{W}^{[n]}) = 0$. Therefore,

\begin{align}
\label{TH_CONV_EQ12}
\I(M;X_\mathcal{W}^{[n]}) &= \I(M;X_{K-T}^{[n]}) + \I(M;X_e^{[n]}|X_{K-T}^{[n]})= 0.
\end{align}

Further implying that,

\begin{align}
\label{TH_CONV_EQ13}
\I(M;X_{K-T}^{[n]}) &= 0,\\
\label{TH_CONV_EQ14}
\I(M;X_e^{[n]}|X_{K-T}^{[n]}) &= 0.
\end{align}

From (\ref{TH_CONV_EQ14}), 

\begin{align}
    \label{TH_CONV_EQ15}
    \Hi(X_e^{[n]}|X_{K-T}^{[n]}) = \Hi(X_e^{[n]}|X_{K-T}^{[n]},M).
\end{align}

We now consider the cut $\mathbb{C}_{K-T}$ and the corresponding partitions $\mathcal{A}$ and $\bar{\mathcal{A}}$. Note that corresponding to the cut $\mathbb{C}_{K-T}$, the set of information and key generating source nodes in $\mathcal{G}_{\rm cyc}$ which are also present in $\bar{\mathcal{A}}$ is $\mathcal{S}_{\bar{\mathcal{A}}}$ where $\mathcal{S}_{\bar{\mathcal{A}}} \subseteq \{S,K\}$. 




Note that $\In(T) \subset \mathbb{C}_{K-T} \cup \mathcal{E}_{\bar{\mathcal{A}}}$. Due to the cut $\mathbb{C}_{K-T}$, it follows that either $S \in \mathcal{V}_{\bar{\mathcal{A}}}$ or $S \in \mathcal{V}_{\mathcal{A}}$. For any edge $e' \in \In(T) \cap \mathbb{C}_{K-T}$, we have the following.
\begin{align}
    \label{TH_CONV_EQ16}
    X_{e'}^{[n]} &= h_{e'}(X_{K-T}^{[n-1]},M), 
\end{align}

where, $h_{e'}$ is a deterministic function.

For any edge $e' \in \In(T) \cup \mathcal{E}_{\bar{\mathcal{A}}}$, by Lemma \ref{LEM_CONV}, we have the following.
\begin{align}
    \label{TH_CONV_EQ17}
    X_{e'}^{[n]} &= \{g_{e'}^{(j)}(X_{K-T}^{[j-1]},M)\}_{j \in [n]}\nonumber\\
    &= h_{e'}(X_{K-T}^{[n-1]},M).
\end{align}

Equation (\ref{TH_CONV_EQ17}) shows that for any edge $e \in \In(T)$, the random variable $X_{e}^{[n]}$ is completely determined by $X_{K-T}^{[n-1]}$ and $M$. As $\Hi(X_{e}^{[n]}|X_{K-T}^{[n]},M) \leq \Hi(X_{e}^{[n]}|X_{K-T}^{[n-1]},M)$, we have.
\begin{align}
    \label{TH_CONV_EQ18}
    \Hi(X_{e}^{[n]}|X_{K-T}^{[n]},M) = 0,
\end{align}

which implies by (\ref{TH_CONV_EQ15}) that $\Hi(X_{e}^{[n]}|X_{K-T}^{[n]}) = 0$. This holds for all $e \in \In(T)$ and thus $\Hi(X_{\In(T)}|X_{K-T}^{[n]}) = 0$. As (\ref{TH_CONV_EQ13}) shows that $X_{K-T}^{[n]}$ is independent of $M$, therefore we conclude that, 

\begin{align}
    \label{TH_CONV_EQ1A}
    \I(M;X_{\In(T)}^{[n]}) = 0.
\end{align}

This in turn implies that the rate realizable by the network code $\mathcal{N}$ is $R=0$ which is a contradiction. Thus, for $R > 0$, an $(R,z)$-feasible network code exists only if $z \leq \min(C_{K-S},C_{K-T})$, i.e., bound (\ref{BND1}) holds.

\subsection{Upper Bound of $R$}

The bound (\ref{BND2}) is a direct consequence of Theorem 2.1 of \cite{8187170} and therefore the proof is not included here.

\subsection{Upper Bound of $R+z$: $R+z \leq C_{KS-T}$}


To show that an $(R,z)$-feasible network code exists only if bound (\ref{BND3}) holds, we start by considering the following cases:
\begin{itemize}
\item \textbf{Case 1:} $z \geq C_{KS-T}$
\item \textbf{Case 2:} $z < C_{KS-T}$ 
\end{itemize}

For \textbf{Case 1}, we see that the eavesdropping adversary has the option of wiretapping all the edges in $\mathbb{C}_{KS-T}$. Therefore, we set $\mathbb{C}_{KS-T} \subseteq \mathcal{W}$ thereby forcing $\I(M;X_{KS-T}^{[n]}) = 0$. 
This, implies that $X_{KS-T}^{[n]}$ is independent of the message symbols $M$. We also observe that  $\In(T) \subset \mathbb{C}_{KS-T} \cup \mathcal{E}_{\bar{\mathcal{A}}}$. From our previous discussions, we note that the random variable $X_{\In(T)}^{[n]}$ is a deterministic function of $X_{KS-T}^{[n]}$ and therefore is also independent $M$. Thus, the terminal $T$ receives no information regarding the message symbols $M$ and therefore the rate realizable by the network code in this case is $R = 0$ which is a contradiction.

For \textbf{Case 2}, let $\mathcal{W} \subset \mathbb{C}_{KS-T}$. Then, $\mathbb{C}_{KS-T} = \mathcal{W} \cup \mathcal{W}^{C}$ where, $\mathcal{W}^C = \mathbb{C}_{KS-T} \setminus \mathcal{W}$. We denote the information on the edges of the set $\mathcal{W}^{C}$ as $X_{\mathcal{W}^C}^{[n]}$ and thus we have that $X_{KS-T}^{[n]} = (X_{\mathcal{W}}^{[n]},X_{\mathcal{W}^C}^{[n]})$ where $\Hi(X_{\mathcal{W}^C}^{[n]}) \leq n(C_{KS-T} - z)$. Here, our measure $\Hi(.)$ of entropy equals 1 for a uniform random variable in $\mathbb{F}_q$. Thus, we have the following.

\begin{align}
\label{TH_CONV_EQ20}
nR &= \I(M;X_{KS-T}^{[n]})\\
&= \I(M;X_{\mathcal{W}}^{[n]},X_{\mathcal{W}^C}^{[n]})\nonumber\\
&= \I(M;X_{\mathcal{W}}^{[n]}) + \I(M;X_{\mathcal{W}^C}^{[n]}|X_{\mathcal{W}}^{[n]})\nonumber\\
\label{TH_CONV_EQ21}
&= \I(M;X_{\mathcal{W}^C}^{[n]}|X_{\mathcal{W}}^{[n]})\\
&= \Hi(X_{\mathcal{W}^C}^{[n]}|X_{\mathcal{W}}^{[n]}) - \Hi(X_{\mathcal{W}^C}^{[n]}|X_{\mathcal{W}}^{[n]},M)\nonumber\\
&\leq \Hi(X_{\mathcal{W}^C}^{[n]}|X_{\mathcal{W}}^{[n]})\nonumber\\
&\leq \Hi(X_{\mathcal{W}^C}^{[n]})\nonumber\\
\label{TH_CONV_EQ22}
&\leq n(C_{KS-T} - z).
\end{align}

Here, (\ref{TH_CONV_EQ20}) is due to our assumption of correctly decoding $M$ and the min-cut max-flow theorem as the cut $\mathbb{C}_{KS-T}$ is an $(S-T)$-cut. (\ref{TH_CONV_EQ21}) is due to the security condition. Thus, one may realize an $(R,z)$-feasible network code over the network $\mathcal{G}_{\rm cyc}$ only if the bound (\ref{BND3}) holds for integers $R > 0$ and $z$. 

Combining our analysis for bounds (\ref{BND1}), (\ref{BND2}) and (\ref{BND3}) proves the theorem. 

\end{proof}

\section{Proof of Lemma \ref{LEM_CONV}}
\label{CS_LEM}
\begin{lemmaproof}
We prove this lemma using an induction hypothesis on the time step parameter $i$. At time $i=0$, we assume that the network edges do not carry any information. Thus, at time $i=1$, the information on all network edges in $\bar{\mathcal{A}}$ are solely a function of the random variable $X_{\bar{\mathcal{A}}}$. 

We assume by induction that the hypothesis holds for $1 < i \leq I-1$, i.e. for $i=I-1$, we have the following.
\begin{align}
  \label{CS_LEM_EQ2}
    X_{e}^{(I-1)} &= g_e^{(I-1)}(\{X_{\mathbb{C}}^{(j)}\}_{j \in [I-2]},X_{\bar{\mathcal{A}}}).
\end{align}

We now consider time step $i = I$. For an edge $e = (u,v)$, $X_e^{(N)}$ is a function of the incoming edges to $u$ and $X_{\bar{\mathcal{A}}}$ (the latter only if $u \in \mathcal{S}_{\bar{\mathcal{A}}}$). Namely,
$$ X_e^{(I)} = f_{e}^{(I)}(\{X_{e'}^{(I-1)}\}_{e' \in \In(u)},X_{\bar{\mathcal{A}}}).$$
For $u \in \mathcal{V}_{\bar{\mathcal{A}}}$, the incoming edges of $u$ are either included in the cut $\mathbb{C}$ or are in $\mathcal{E}_{\bar{\mathcal{A}}}$. Thus,
$$ X_e^{(I)} = f_{e}^{(I)}(X_\mathbb{C}^{[I-1]}, \{X_{e'}^{(I-1)}\}_{e' \in \In(u)},X_{\bar{\mathcal{A}}}).$$
By induction, as $X_{e'}^{(I-1)}$ is a function of $X_\mathbb{C}^{[I-2]}$ and $X_{\bar{\mathcal{A}}}$ for $e' \in \mathcal{E}_{\bar{\mathcal{A}}}$. We conclude that there exists a function $g_e^{(I)}$ such that,
$$ X_e^{(I)} = g_e^{(I)}(X_\mathbb{C}^{[I-1]},X_{\bar{\mathcal{A}}}).$$
This proves the lemma.
\end{lemmaproof}

\section{Proof of Claims}

\label{CLAIMPROOFS}

\subsection{Proof of Claim \ref{CL_RED}}
\label{CLAIMPROOF_CL_RED}
\begin{proof}[\unskip\nopunct]
To prove that the network code $\mathcal{N}$ is $R$-decodable over network $\mathcal{G}$, given that $\mathcal{N}^*$ is $\mathbf{R}$-feasible over $\mathcal{G}^*$, we consider the following steps.

\begin{itemize}
    \item [1.] We disconnect terminal $T^*$ from $S$ by removing the edges connecting $S$ to $T^*$.
    \item [2.] We keep the random assignment of the local coding coefficients for each edge in $\mathcal{E}$ unchanged.
    \item [3.] As in $\mathcal{N}^*$, the source $S$ does not decode the keys $N$ but ``mixes'' the incoming combinations of the keys in $N$ with the message $M$ that it holds, and transmits the resulting combinations through $\Out(S)$.
    \item [4.] The information that terminal $T$ wants to decode also remains unchanged.
\end{itemize}

By initiating the steps above, we obtain the network code $\mathcal{N}$ from $\mathcal{N}^*$. It also follows that the network code $\mathcal{N}$ allows $T$ to decode all $R$ symbols of $M$ as the $\mathbf{R}$-feasible $\mathcal{N}^*$ allows $T$ to decode all $R$ symbols of $M$ and $z$ symbols of $N$, thereby satisfying condition (\ref{NC_EQ1}). This proves the claim. 
\end{proof}

\subsection{Proof of Claim \ref{SEC_CL_1}}
\label{CLAIMPROOF_SEC_CL_1}
\begin{proof}[\unskip\nopunct]
Since $\rk(\irow{\mathbf{U}_K & \mathbf{U}_S}) = z$, we assume, without loss of generality, that the first $\sigma_K$ columns of $\mathbf{U}_K$ and $\sigma_S$ columns of $\mathbf{U}_S$ are jointly linearly independent with $\sigma_K + \sigma_S = z$. Then, we have that $\mathbf{U}_S = \irow{\mathbf{\Bar{U}}_S & \mathbf{\Bar{U}}_S\mathbf{\Gamma}}$, where $\mathbf{\bar{U}}_S$ is $z \times \sigma_S$ matrix of full column-rank $\sigma_S$, and $\mathbf{\Gamma}$ is a $\sigma_S \times (O_S - \sigma_S)$ matrix. Let $\mathbf{\Bar{U}}_K$ be the sub-matrix of $\mathbf{U}_K$ containing the $\sigma_K$ linearly independent columns of $\mathbf{U}_K$, and $\mathbf{\Delta}$ be a $\sigma_K \times O_K$ matrix such that $\mathbf{U}_K = \mathbf{\Bar{U}}_K\mathbf{\Delta}$. Then, as per our assumption, the first $\sigma_K$ columns $\mathbf{\Delta}$ form a $\sigma_K \times \sigma_K$ identity matrix. Therefore, we have that $\rk(\mathbf{\Delta}) = \sigma_K$

We now consider the matrix $\mathbf{V}_{\In(S)}$. Since, $\rk(\mathbf{V}_{\In(S)}) = z$, we may express $\mathbf{V}_{\In(S)}$ as follows.

\begin{align}
    \label{SEC_CL_EQ_a}
    \mathbf{V}_{\In(S)} &= \begin{bmatrix}
    \mathbf{V} & \mathbf{V}_1\\
    \mathbf{V}_2 & \mathbf{V}_3
    \end{bmatrix}
\end{align}

Here, $\mathbf{V}$ is a $z \times z$ invertible sub-matrix of $\mathbf{V}_{\In(S)}$. The columns of the $z \times (O_K-z)$ sub-matrix $\mathbf{V}_1$ and the rows of the $(I_S-z) \times z$ sub-matrix $\mathbf{V}_2$ are spanned by the columns and rows of $\mathbf{V}$ respectively, while the rows of the sub-matrix $\mathbf{V}_3$ are spanned by the rows of $\mathbf{V}_1$. Thus we may rewrite \eqref{SEC_CL_EQ_a} as follows

\begin{align}
    \label{SEC_CL_EQ_a_1}
    \mathbf{V}_{\In(S)} &= \begin{bmatrix}
    \mathbf{V} & \mathbf{V}_1\\
    \mathbf{A}\mathbf{V} & \mathbf{V}_3
    \end{bmatrix}
\end{align}

Here, $\mathbf{A}$ is  an $(I_S - z) \times z$ matrix. We now partition the matrix $\mathbf{B}_S  = \irow{\mathbf{B}_{S,1} & \mathbf{B}_{S,2}}$ such that $\mathbf{B}_{S,1}$ and $\mathbf{B}_{S,2}$ are $O_S \times z$ and $O_S \times (I_S-z)$ matrices respectively. Let $\mathbf{B} \triangleq \mathbf{B}_S\mathbf{V}_{\In(S)}$, then by using \eqref{SEC_CL_EQ_a_1} we obtain the following.

\begin{align}
    \label{SEC_CL_EQ_b}
    \mathbf{B} &= \irow{\mathbf{B}_{S,1}\mathbf{V}+\mathbf{B}_{S,2}\mathbf{A}\mathbf{V} & \mathbf{B}_{S,1}\mathbf{V}_1+\mathbf{B}_{S,2}\mathbf{V}_3}\nonumber\\
    &\triangleq\irow{\mathbf{\bar{B}} & \mathbf{\Tilde{B}}}
\end{align}

Here, the matrices $\mathbf{\bar{B}}$ and $\mathbf{\Tilde{B}}$ are $O_S \times z$ and $O_S \times (I_S-z)$, respectively. Furthermore, we partition $\mathbf{\bar{B}} = \icol{\mathbf{\bar{B}}^1 \\ \mathbf{\bar{B}}^2}$, where $\mathbf{\bar{B}}^1$ and $\mathbf{\bar{B}}^2$ are $\sigma_S \times z$ and $(O_S-\sigma_S) \times z$ sub-matrices respectively. Likewise, we partition $\mathbf{\Tilde{B}} = \icol{\mathbf{\Tilde{B}}^1 \\ \mathbf{\Tilde{B}}^2}$. Then, using \eqref{SEC_CL_EQ_a_1} and \eqref{SEC_CL_EQ_b}, we may rewrite \eqref{SEC_EQ6} as follows.

\begin{align}
    \label{SEC_CL1_EQ_c}
    \mathbf{\Phi} &= \irow{\mathbf{\bar{U}}_K & \mathbf{\bar{U}}_S}\begin{bmatrix}
    \mathbf{\Delta}\\ \\
    \mathbf{\bar{B}}^1+\mathbf{\Gamma}\mathbf{\bar{B}}^2 & \mathbf{\Tilde{B}}^1+\mathbf{\Gamma}\mathbf{\Tilde{B}}^2
    \end{bmatrix}
\end{align}

We now consider partition $\mathbf{\Delta} = \irow{\mathbf{\Delta}_z & \mathbf{\Hat{\Delta}}}$, where $\mathbf{\Delta}_z$ consists of the first $z$ columns of $\mathbf{\Delta}$. Then, we have that $\mathbf{\Delta}_z = \irow{\mathbf{I}_{\sigma_K} & \mathbf{\Bar{\Delta}}_z}$ where $\mathbf{I}_K$ is the $\sigma_K \times \sigma_K$ identity matrix and $\mathbf{\Bar{\Delta}}_z$ is a $\sigma_K \times (z-\sigma_K)$ matrix whose columns are spanned by $\mathbf{I}_{\sigma_K}$. Thus, we see that the $\sigma_K$ rows of $\mathbf{\Delta}_z$ are linearly independent. We may rewrite \eqref{SEC_CL1_EQ_c} as follows.

\begin{align}
    \label{SEC_CL1_EQ_d}
    \mathbf{\Phi} &= \irow{\mathbf{\bar{U}}_K & \mathbf{\bar{U}}_S}\begin{bmatrix}
    \mathbf{\Delta}_z & \mathbf{\Hat{\Delta}}\\ \\
    \mathbf{\bar{B}}^1+\mathbf{\Gamma}\mathbf{\bar{B}}^2 & \mathbf{\Tilde{B}}^1+\mathbf{\Gamma}\mathbf{\Tilde{B}}^2
    \end{bmatrix}\nonumber\\
    &\triangleq \irow{\mathbf{\bar{U}}_K & \mathbf{\bar{U}}_S}\mathbf{Q}
\end{align}

Since $\irow{\mathbf{\bar{U}}_K & \mathbf{\bar{U}}_S}$ is invertible, $\rk(\mathbf{Q}) = z$ implies $\rk(\mathbf{\Phi}) = z$. To prove that $\rk(\mathbf{Q}) = z$, we show that the $\sigma_S$ rows of $\mathbf{\bar{B}}^1+\mathbf{\Gamma}\mathbf{\bar{B}}^2$, having dimension $z$, are linearly independent and not spanned by the $\sigma_K$ rows of $\mathbf{\Delta}_z$. 

Given that the entries of the matrix $\mathbf{B}_S$ are i.i.d and uniform in $\mathbb{F}_q$, for any $\psi \in \mathbb{F}_q^z$ and for $i \in [O_S]$, we compute the probability $\Pr_{\mathbf{B}_S}\{(\bar{b})^i = \psi\}$, where $(\bar{b})^i$ denotes the $i$-th row of $\mathbf{\bar{B}}$ of dimension $z$. From \eqref{SEC_CL_EQ_b}, we see that 
\begin{align}
    \label{SEC_CL1_EQ_e1a}
    (\Bar{b}^i) = (b_{S,1})^i\mathbf{V}+ (\mathbf{B}_{S,2}\mathbf{A})^i\mathbf{V}.
\end{align}

Here, $(b_{S,1})^i$ and $(\mathbf{B}_{S,2}\mathbf{A})^i$ denotes the $i$-th row of $\mathbf{B}_{S,1}$ and $\mathbf{B}_{S,2}\mathbf{A}$ respectively. As $\mathbf{V}$ is invertible, we have the following.

\begin{align}
    \Pr_{\mathbf{B}_S}\{(\bar{b})^i = \psi\} &= \Pr_{(b_{S,1})^i,\mathbf{B}_{S,2}}\{(b_{S,1})^i+ (\mathbf{B}_{S,2}\mathbf{A})^i = \psi\mathbf{V}^{-1}\}\nonumber\\
    &= \sum_{\bar{\psi}}\Pr_{\mathbf{B}_{S,2}}\{(\mathbf{B}_{S,2}\mathbf{A})^i = \bar{\psi}\}.\nonumber\\
    &\quad\quad \Pr_{(b_{S,1})^i}\{(b_{S,1})^i = (\psi\mathbf{V}^{-1} - \bar{\psi})\}\nonumber\\
    \label{SEC_CL1_EQ_e1}
    &= \dfrac{1}{q^z}\sum_{\bar{\psi}}\Pr_{\mathbf{B}_{S,2}}\{(\mathbf{B}_{S,2}\mathbf{A})^i = \bar{\psi}\}\\
    \label{SEC_CL1_EQ_e2}
    &= \dfrac{1}{q^z}
\end{align}

Here, \eqref{SEC_CL1_EQ_e1} is due to that fact that the entries of $\mathbf{B}_{S,1}$, which is a sub-matrix of $\mathbf{B}_S$, are i.i.d. and uniform in $\mathbb{F}_q$. From \eqref{SEC_CL1_EQ_e2}, we see that the  rows of $\mathbf{\bar{B}}$ are uniform in $\mathbb{F}_q^z$. For a fixed matrix $\mathbf{B}_{S,2}$ in \eqref{SEC_CL1_EQ_e1a} and due to the invertibility of matrix $\mathbf{V}$, there exists a 1-1 map between $(b_{S,1})^i \in \mathbb{F}_q^z$ and $(\Bar{b}^i) \in \mathbb{F}_q^z$, for all $i \in [O_S]$. Now, as the vectors $(b_{S,1})^i$ are chosen independently for each $i \in [O_S]$, it follows that the corresponding vectors $(\Bar{b}^i)$ must also be independent for all $i \in [O_S]$.
This implies that the rows of $\mathbf{\bar{B}}^1$, which is a sub-matrix of $\mathbf{\bar{B}}$ containing its first $\sigma_S$ rows, are also i.i.d and uniform in $\mathbb{F}_q^z$. 

Let $\mathbf{C} \triangleq \mathbf{\bar{B}}^1+\mathbf{\Gamma}\mathbf{\bar{B}}^2$ and $c^i$ denote the $i$-th row of $\mathbf{C}$ for $i \in [\sigma_S]$. For any $\rho \in \mathbb{F}_q^z$, we compute the probability $\Pr_{\mathbf{\bar{B}}}\{c^i = \rho\}$. Denoting the $i$-th rows of $\mathbf{\bar{B}}^1$ and $\mathbf{\Gamma}\mathbf{\bar{B}}^2$ as $(\bar{b}^1)^i$ and $(\mathbf{\Gamma}\mathbf{\bar{B}}^2)^i$, respectively, we have that, $$c^i = (\bar{b}^1)^i + (\mathbf{\Gamma}\mathbf{\bar{B}}^2)^i.$$ Note that the rows of $\mathbf{\bar{B}}^1$ form the first $\sigma_S$ rows of $\mathbf{\bar{B}}$ and therefore are i.i.d. and uniform in $\mathbb{F}_q^z$. Thus, by applying the same argument as in \eqref{SEC_CL1_EQ_e2}, we obtain $\Pr_{\mathbf{\bar{B}}}\{c^i = \rho\} = \dfrac{1}{q^z}$. The vectors $\{c^i\}_{i \in [O_S]}$ are mutually independent due to the fact that the vectors $\{(\bar{b}^1)^i\}_{i \in [O_S]}$ are mutually independent. Thus, as $\sigma_K+\sigma_S = z$, we have the following.




\begin{align}
    \label{SEC_CL1_EQ_g}
    \Pr_{\mathbf{B}_S}\{\rk(\Phi) = z\} &= \dfrac{\prod_{l=0}^{\sigma_S-1}q^{z} - q^{\sigma_K+l}}{q^{z\sigma_S}}\nonumber\\
    &= \prod_{l=0}^{\sigma_S-1}\Big(1 - \dfrac{1}{q^{z-\sigma_k - l}}\Big)\nonumber\\
    &> \Big(1 - \dfrac{1}{q}\Big)^{\sigma_S}\nonumber\\
    &> 1 - \dfrac{\sigma_S}{q}\nonumber\\
    &> 1 - \dfrac{z}{q}
\end{align}

This proves our claim.

\end{proof}
\subsection{Proof of Claim \ref{SEC_CL_2}}
\label{CLAIMPROOF_SEC_CL_2}
\begin{proof}[\unskip\nopunct]

Let $\mathbf{AB} = \irow{\lambda_1 & \lambda_2 & \cdots & \lambda_n}$, where $\lambda_j \in \mathbb{F}_q^n$ for $j \in [n]$, $\mathbf{A} = \irow{a_1 & a_2 & \cdots & a_m}$, where $a_i \in \mathbb{F}_q^n$ for $i \in [m]$ and $\mathbf{B} = \{b_{i,j}\}_{i \in [m], j \in [n]}$, where $b_{i,j}$'s are i.i.d. and uniform over $\mathbb{F}_q$.

For any $j \in [n]$, we have
\begin{align}
    \label{SEC_CL2_EQ1}
    \lambda_j &= \sum_{i \in [m]}a_i b_{i,j}
\end{align}

For any $\omega \in \mathbb{F}_q^n$, we first compute  $\Pr_{b_j}\{\lambda_j = \omega\}$, where $b_j$ is the $m$-dimensional $j$-th column of $\mathbf{B}$. Since $\rk(\mathbf{A}) = n$, we assume, without loss of generality, that the last $n$ columns of $\mathbf{A}$ are linearly independent. We also partition $b_j$ such that $b_j = \icol{\Bar{b}_j \\ \Tilde{b}_j}$ where $\Tilde{b}_j$ consists of the last $n$ entries of $b_j$. Then, we may rewrite (\ref{SEC_CL2_EQ1}) as

\begin{align}
    \label{SEC_CL2_EQ2}
    \lambda_j &= \sum_{i \in [m-n]}a_i b_{i,j} + \sum_{i = m-n+1}^{m}a_i b_{i,j}
\end{align}
Then,
\begin{align}
    \label{SEC_CL2_EQ3}
    \Pr_{b_j}\{\lambda_j = \omega\} &= \Pr_{b_j}\{\sum_{i \in [m]}a_i b_{i,j} = \omega\}\nonumber\\
    &= \sum_{\Hat{\omega}}\Pr_{\bar{b}_j}\{\sum_{i \in [m-n]}a_i b_{i,j} = \Hat{\omega}\}.\nonumber\\
    &\quad\quad \Pr_{\Tilde{b}_j}\{\sum_{i = m-n+1}^{m}a_i b_{i,j} = \omega - \Hat{\omega}\}
\end{align}

Since the $n$-dimensional columns $\{a_i\}_{i = m-n+1}^{m}$ are linearly independent, the $n$-system of equations $\sum_{i = m-n+1}^{m}a_i b_{i,j} = \omega - \Hat{\omega}$ must have a unique solution for each $\Tilde{b}_j \in \mathbb{F}_q^n$, and as the entries of $\Tilde{b}_j$ are i.i.d. uniform in $\mathbb{F}_q$, we have that $\Pr_{\Tilde{b}_j}\{\sum_{i = m-n+1}^{m}a_i b_{i,j} = \omega - \Hat{\omega}\} = 1/q^n$. Thus, we may rewrite (\ref{SEC_CL2_EQ3}) as follows.

\begin{align}
    \label{SEC_CL2_EQ4}
    \Pr_{b_j}\{\lambda_j = \omega\} &= \dfrac{1}{q^n}\sum_{\Hat{\omega}}\Pr_{\bar{b}_j}\{\sum_{i \in [m-n]}a_i b_{i,j} = \Hat{\omega}\}\nonumber\\
    &= \dfrac{1}{q^n}
\end{align}

Equation (\ref{SEC_CL2_EQ4}) implies that the columns $\{\lambda_j\}_{j \in [n]}$ are uniform in $\mathbb{F}_q^n$. The columns $\lambda_j$ are also mutually independent due to the fact that the columns $b_j$ are independent for all $j \in [n]$. Thus, we have

\begin{align}
    \label{SEC_CL2_EQ5}
    \Pr_{\mathbf{B}}\{\rk(\mathbf{A}\mathbf{B}) = n\} &= \dfrac{\prod_{l = 0}^{n-1}\Big(q^n - q^l\Big)}{q^{n^2}}\nonumber\\
    &= \prod_{l = 0}^{n-1}\Big(1 - \dfrac{1}{q^{n-l}}\Big)\nonumber\\
    &> \Big(1 - \dfrac{1}{q}\Big)^n\nonumber\\
    &> 1 - \dfrac{n}{q}
\end{align}

This proves the claim.

\end{proof}

\end{appendices}

\end{document}